\theoremstyle{definition}
\newtheorem{definition}{Definition}
\theoremstyle{theorem}
\newtheorem{proposition}{Proposition}
\newtheorem{lemma}{Lemma}
\newtheorem{theorem}{Theorem}
\newtheorem{corollary}{Corollary}
\theoremstyle{definition}
\newtheorem{fact}{Fact}
\newcommand{\cf}{cf.\ }
\newcommand{\st}{s.t.\ }
\newcommand{\etc}{etc.\xspace}
\newcommand{\eg}{e.g.,\xspace}
\newcommand{\ie}{i.e.,\xspace}
\newcommand{\defeq}{\mathrel{:=}}
\newcommand{\defiff}{\mbox{:iff}}
\newcommand{\bnfeq}{\mathrel{::=}}
\newcommand{\bnfor}{\ \big| \ }
\newcommand{\set}[1]{\{#1\}}
\newcommand{\setst}[2]{\{\ #1\ \boldsymbol{|}\ #2\ \}}
\newcommand{\powerset}[1]{2^{#1}}
\newcommand{\nn}[1]{\makebox[2em][r]{#1}\ \ }
\newcommand{\true}{\text{true}}
\newcommand{\false}{\text{false}}
\newcommand{\Scomb}{\mathtt{S}}
\newcommand{\Kcomb}{\mathtt{K}}
\newcommand{\agents}{\mathcal{A}}
\newcommand{\community}{\mathcal{C}}
\newcommand{\messages}{\mathcal{M}}
\newcommand{\data}{\mathcal{D}}
\newcommand{\pair}[2]{(#1,#2)}
\newcommand{\sign}[2]{{\{\negmedspace[#1]\negmedspace\}}_{#2}}
\newcommand{\derives}[3]{#3\vdash_{#1}#2}
\newcommand{\states}{\mathcal{S}}
\newcommand{\msgs}[1]{\mathrm{msgs}_{#1}}
\newcommand{\preorder}[1]{\leq_{#1}}
\newcommand{\indist}[3]{#2\equiv_{#1}#3}
\newcommand{\pAccess}[3]{\mathrel{_{#1}\negthinspace\mathrm{R}_{#2}^{#3}}}
\newcommand{\access}[3]{\mathrel{_{#1}\negthinspace\mathcal{R}_{#2}^{#3}}}
\newcommand{\clo}[2]{\mathrm{cl}_{#1}^{#2}}
\newcommand{\Clo}[2]{\mathrm{Cl}_{#1}^{#2}}
\newcommand{\LiP}{\mathrm{LiP}}
\newcommand{\pFormulas}{\mathcal{L}}
\renewcommand{\true}{\top}
\renewcommand{\false}{\bot}
\newcommand{\relsym}[1]{\thinspace{#1}\thinspace}
\newcommand{\knows}[2]{#1\relsym{\mathsf{k}}#2}
\newcommand{\limp}{\rightarrow}
\newcommand{\lequiv}{\leftrightarrow}
\newcommand{\K}[1]{\mathsf{K}_{#1}}
\newcommand{\proves}[4]{#1\relsym{:_{#3}^{#4}}#2}
\newcommand{\aModalFrame}{\mathfrak{S}}
\newcommand{\LiPded}{\vdash_{\LiP}}
\renewcommand{\preorder}[1]{<_{#1}}
\renewcommand{\proves}[4]{#1\relsym{::_{#3}^{#4}}#2}
\newcommand{\LiiP}{\mathrm{LiiP}}
\newcommand{\LiiPded}{\vdash_{\LiiP}}
\newcommand{\LiiPplus}{\LiiP^{+}}
\newcommand{\LiiPplusDed}{\vdash_{\LiiPplus}}
\newcommand{\LiiPdedBis}{\mathrel{{\dashv}{\vdash}_{\LiiP}}}
\newcommand{\pproves}[4]{#1\relsym{:_{#3}^{#4}}#2}
\begin{document}
\title{Logic of Non-Monotonic Interactive Proofs\thanks{Work funded with 
		Grant AFR~894328 from the National Research Fund Luxembourg  
			cofunded under the Marie-Curie Actions of the European Commission (FP7-COFUND), and
			finalised during an invited stay at the Institute of Mathematical Sciences, Chennai, India.}\\
		\Large{(Formal Theory of Temporary Knowledge Transfer)}
	}
  \author{Simon Kramer\\[\jot]
	\texttt{simon.kramer@a3.epfl.ch}}
\maketitle
	
\begin{abstract}
	We propose a monotonic logic of internalised \emph{non-monotonic} or \emph{instant} interactive proofs (LiiP) and
		reconstruct an existing monotonic logic of internalised monotonic or persistent interactive proofs (LiP) as 
			a minimal conservative extension of LiiP.
	Instant interactive proofs effect
		a \emph{fragile} epistemic impact in their intended communities of peer reviewers that
			consists in the \emph{impermanent} induction of the knowledge of their proof goal 
				by means of the knowledge of the proof with the interpreting reviewer:
	If my peer reviewer knew my proof 
	then she would \emph{at least then (in that instant)} know that its proof goal is true.
	Their impact is fragile and their induction of knowledge impermanent in the sense of 
		being the case possibly only at the instant of learning the proof.
	This accounts for the important possibility of internalising proofs of statements 
		 whose 
				truth value can vary, 
					which, as opposed to invariant statements, 
						cannot have persistent proofs.	
	So instant interactive proofs effect
		a \emph{temporary} transfer of certain propositional knowledge (knowable \emph{ephemeral} facts) via 
			the transmission of certain individual knowledge (knowable \emph{non-monotonic} proofs) in 
				distributed systems of multiple interacting agents.

	\medskip
	
	\noindent
 	\textbf{Keywords:}
	agents as proof- and signature-checkers;
	constructive Kripke-semantics; 
	interpreted communication; 
	multi-agent distributed systems;
	interactive and oracle computation; 
	proofs as sufficient evidence.
  \end{abstract}
  

\section{Introduction}
The subject matter of this paper is modal logic of interactive proofs, \ie 
	a novel logic of \emph{non-monotonic} or \emph{instant} interactive proofs (LiiP) as well as 
	an existing logic of monotonic or persistent interactive proofs (LiP) \cite{LiP}.
(We abbreviate interactivity-related adjectives with lower-case letters.)
The goal here is 
	to define LiiP axiomatically and semantically as well as 
	to reconstruct LiP as a minimal conservative extension of LiiP.
So for distributed and multi-agent systems, whose 
	states and thus truth of statements about states can vary, 
proof non-monotonicity (as in LiiP) is 
	in a logical sense more primitive than proof monotonicity (as in LiP).
In contrast, 
	proof monotonicity is perhaps more intuitive than proof non-monotonicity 
		within formal physical theories validated by experiment and surely 
		within mathematical theories known to be consistent.

Rephrasing \cite[Section~1.1]{NonMonotonicReasoning} model-theoretically, 
	the proof modality of LiiP internalises a non-monotonic notion of proof in the sense that    
		it can happen that  
			a proposition $\phi$ can be proved with a (non-monotonic) proof $M$ to an agent $a$ 
				in some system state $s$, but
				not anymore in some subsequent state $s'$ in which 
					$a$ will have learnt additional or lost previously learnt data $M'$.
See Appendix~\ref{section:ApplicationExamples} for formal application examples.
Like in LiP \cite{LiP},
	we understand interactive \emph{proofs as sufficient evidence} to 
		intended \emph{resource-unbounded} (though unable to guess) 
			proof- and signature-checking agents (designated verifiers).

\emph{Instant} interactive proofs effect
	a \emph{fragile} epistemic impact in their intended communities $\community$ of peer reviewers that
		consists in the \emph{impermanent} induction of the (propositional) knowledge (not only belief) of their proof goal $\phi$
			by means of the (individual) knowledge of the proof (the sufficient evidence) $M$ with 
				the designated interpreting reviewer $a:$
If $a$ knew my proof $M$ of $\phi$ 
then she would \emph{at least then (in that instant)} know that the proof goal $\phi$ is true.
By individual knowledge we mean
		knowledge in the sense of the transitive use of the verb ``to know,''
			here to know a message, such as the plaintext of an encrypted message.
Notation: $\knows{a}{M}$ for ``agent $a$ knows message $M$'' 
	(\cf Definition~\ref{definition:LiiPLanguage}).
This is the classic concept of knowledge \emph{de re} (``of a thing'') 
	made explicit for messages,  
	meaning taking them apart (analysing) and 
	putting them together (synthesising).
Whereas by propositional knowledge we mean
	knowledge in the sense of the use of the verb ``to know'' with a clause,  
		here to know that a statement is true, 
			such as that the plaintext of an encrypted message is (individually) unknown to potential adversaries. 
Notation: $\K{a}(\phi)$ for ``agent $a$ knows that $\phi$ (is true)''
	(\cf Fact~\ref{fact:KC}).
This is the classic concept of knowledge \emph{de dicto} (``of a fact'').\footnote{
	In a first-order setting, 
		knowledge \emph{de re} and \emph{de dicto} can
			be related in Barcan-laws \cite{MMFAAMAS}.}
(We distinguish individual and propositional knowledge with respect to the \emph{``object''} of knowledge [the know\emph{n}],
		\ie with respect to a message and clause, respectively.
However, individual as well as propositional knowledge can both be individual with respect to the \emph{subject} of knowledge [the know\emph{er}], 
	\ie an [individual] agent.)
With respect to belief, propositional knowledge essentially differs in that
		it is necessarily true whereas 
		belief is possibly false, as 
			commonly known and accepted \cite{MultiAgents}.
The epistemic impact of our instant interactive proofs is fragile and their induction of knowledge impermanent in the sense of 
	being the case possibly only at the instant of learning the proof.
This accounts for the important possibility of internalising proofs of statements, whose 
	truth value can vary, such as statements about system states, which, 
		as opposed to invariant statements, cannot have persistent proofs.
Proofs must (not) prove true (false) statements!
Standard examples of statements of variable truth value are 
	contingent (\eg elementary) facts (expressed as atomic formulas) and 
	characteristic formulas of states \cite{ModalModelTheory}.

In contrast \cite{LiP},
	the epistemic impact of \emph{persistent} interactive proofs is \emph{durable}  
	in the sense of 
		being the case necessarily at the instant of learning the proof \emph{and henceforth}, where
			time can be present implicitly (such as here) or explicitly (in future work).
In other words,
	when a persistent proof can prove a certain statement, 
		the proof will always be able to \emph{robustly} do so, 
			independently of whether or not more messages (data) than just the proof are learnt.

In sum, our instant interactive proofs effect  
	a transfer of propositional knowledge (knowable ephemeral facts) via 
		the transmission of certain individual knowledge (knowable non-monotonic proofs) in multi-agent 
		distributed systems.
That is,
	\emph{L(i)iP is a formal theory of (temporary) knowledge transfer.}
The overarching motivation for L(i)iP is 
		to serve in an intuitionistic foundation of interactive computation. 
See \cite{LiP} for a programmatic and methodological motivation.

\subsection{Contribution}\label{section:contribution}
Our technical contribution in this paper is fourfold. 
	For LiiP,
		we provide
			an adequate axiomatisation of its 
				oracle-computational and knowledge-constructive Kripke-semantics, and
			a minimal conservative extension LiiP$^{+}$ with a single monotonicity axiom schema making 
					LiiP$^{+}$ isomorphic to LiP. 
	For LiP,
		we provide 
			a substantially simplified semantic interface and
			a slightly simplified axiomatisation, which is 
	 			a nice side-effect of obtaining LiiP$^{+}$.
		
	The Kripke-semantics for LiiP (like for LiP \cite{LiP}) is knowledge-constructive in the sense that
		(\cf Fact~\ref{fact:KC})
		our interactive proofs 
			induce the knowledge of their proof goal (say $\phi$)
				in their intended interpreting agents (say $a$) such that
					the induced knowledge ($\K{a}(\phi)$) is knowledge in the sense of 
						the standard modal logic of knowledge S5
							\cite{Epistemic_Logic,MultiAgents,EpistemicLogicFiveQuestions}.
	Note that our agents here are still resource-\emph{un}bounded with respect to 
		individual and propositional knowledge, though
			they are still unable to guess that knowledge.
	(Recall that S5-agents are resource-unbounded, \ie logically omniscient.)
	Thus we give an epistemic explication of proofs, \ie
		an explication of proofs in terms of the epistemic impact that 
				they effect in their intended interpreting agents 
					(\ie the knowledge of their proof goal).
	Technically,
		we endow the proof modality with a standard Kripke-semantics \cite{ModalLogicSemanticPerspective}, but whose
			accessibility relation $\access{M}{a}{\community}$ we 
				first define constructively in terms of 
					elementary set-theoretic constructions,\footnote{in loose analogy with 
							the set-theoretically constructive rather than 
							the purely axiomatic definition of 
								numbers \cite{TheNumberSystems} or 
								ordered pairs (\eg the now standard definition by Kuratowski, 
									and other well-known definitions \cite{NotesOnSetTheory})} 
							namely as $\pAccess{M}{a}{\community}$,
				and then match to an abstract semantic interface in standard form (which 
					abstractly stipulates the characteristic properties of the accessibility relation
						\cite{ModalProofTheory}).
	We will say that $\pAccess{M}{a}{\community}$ \emph{exemplifies} 
		(or \emph{realises}) $\access{M}{a}{\community}$.
	(A simple example of a constructive definition of a modal accessibility is
		the well-known definition of epistemic accessibility as 
			state indistinguishability defined in terms of  
				equality of state projections \cite{Epistemic_Logic}.)
	Recall, set-theoretically constructive is different from 
		intuitionistically constructive!
	The Kripke-semantics for LiiP is oracle-computational in the sense that
		(\cf Definition~\ref{definition:SemanticIngredients})
		the individual proof knowledge (say $M$) can be thought of as being provided by 
			an imaginary computation oracle, which thus acts as a 
				hypothetical provider and imaginary epistemic source of our interactive proofs.
	The semantic interface of LiP here is simplified in the sense that
		we are able 
			to eliminate all \emph{a posteriori} constraints from the
				semantic interface in \cite{LiP} and thus 
			to manage with only standard, \emph{a priori} constraints, \ie stipulations.
	%
	%
	%
	%
	%

\subsection{Roadmap}
In the next section,
	we introduce our Logic of instant interactive Proofs (LiiP) axiomatically by means of 
				a compact closure operator 
			that induces the Hilbert-style proof system that we seek and
			that allows the simple generation of applica\-tion-specific extensions of LiiP 
				(\cf Appendix~\ref{section:ApplicationExamples}).
We then prove some useful (further-used) deducible laws within the obtained system.
Next, we introduce the set-theoretically constructive semantics and the abstract semantic interface for LiiP, and
	prove the axiomatic adequacy of the proof system with respect to this interface.
In the construction of the semantics,
	we again make use of a closure operator, but 
		this time on sets of proof terms.
Finally in Section~\ref{section:LiP},
	we reconstruct LiP as a minimal conservative extension of LiiP.
%

\section{Logic of instant interactive Proofs}
The Logic of instant interactive Proofs (LiiP) provides
	a modal \emph{formula language} over a generic message \emph{term language}.
The formula language offers
		the propositional constructors, 
		a relational symbol `$\knows{}{}$' for constructing atomic propositions about individual knowledge (\eg $\knows{a}{M}$), and
		a modal constructor `$\proves{}{}{}{}$' for propositions about proofs
			(\eg $\proves{M}{\phi}{a}{\community}$).
The message language offers
	term constructors for 
		message \emph{pairing} and (not necessarily, but possibly cryptographically implemented) \emph{signing}.
(Cryptographic signature creation and verification is polynomial-time computable \cite{DigitalSignatures}.
See \cite{LiP} for other cryptographic constructors such as encryption and hashing.)
In brief, LiiP is a minimal modular extension of classical propositional logic with an interactively generalised
	additional operator (the proof modality) and 
	proof-term language (only two constructors, \emph{agents as proof- and signature-checkers}).
Note, the language of LiiP is identical to the one of LiP \cite{LiP}
	modulo the proof-modality notation, which
		in LiP is `$\pproves{}{}{}{}$'.
\begin{definition}[The language of LiiP]\label{definition:LiiPLanguage}
	Let
	\begin{itemize}
		\item $\agents\neq\emptyset$ designate a non-empty finite set of 
			\emph{agent names} $a$, $b$, $c$, \etc 
		\item $\community\subseteq\agents$ denote 
			(finite and not necessarily disjoint) communities (sets) of agents $a\in\agents$ (referred to by their name)
		\item $\messages \ni M\ \bnfeq\ 
			a \bnfor B \bnfor 
			\sign{M}{a} \bnfor
			\pair{M}{M}$ 
			designate 
			our language of \emph{message terms} $M$ \emph{over} $\agents$ with 
				(transmittable) agent names $a\in\agents$, 
				application-specific data $B$ (left blank here),
				signed messages $\sign{M}{a}$, and
				message pairs $\pair{M}{M}$ 
			
			(Messages must be grammatically well-formed, which
				yields an induction principle.
				So agent names $a$ are logical term constants, 
				the meta-variable $B$ just signals the possibility of an extended term language $\messages$, 
				$\sign{\cdot}{a}$ with $a\in\agents$ is a unary functional symbol, and
				$\pair{\cdot}{\cdot}$ a binary functional symbol.)
		\item $\mathcal{P}$ designate a denumerable set of \emph{propositional variables} $P$ constrained such that
			for all $a\in\agents$ and $M\in\messages$, 
				$(\knows{a}{M})\in\mathcal{P}$ (for ``$a$ knows $M$'') is 
					a distinguished variable, \ie 
						an \emph{atomic proposition}, (for \emph{individual} knowledge)
						
			(So, for $a\in\agents$, $\knows{a}{\cdot}$ is a unary relational symbol.)
		\item $\pFormulas\ni\phi \bnfeq P \bnfor 
				\neg\phi \bnfor 
				\phi\land\phi \bnfor 
				\colorbox[gray]{0.75}{$\proves{M}{\phi}{a}{\community}$}$ 
				designate our language of \emph{logical formulas} $\phi$, where
				$\proves{M}{\phi}{a}{\community}$ reads 
					``$M$ is a $\community\cup\set{a}$-\emph{reviewable proof} of $\phi$'' 
						in 
						that 
							``$M$ can prove $\phi$ to $a$ (\eg a designated verifying judge) and this 
								is commonly accepted in the (pointed) community 
									$\community\cup\set{a}$ (\eg for $\community$ being a jury).''
	\end{itemize}
\end{definition}
Then LiiP has the following axiom and deduction-rule schemas, 
with grey-shading indicating the difference to LiP. 
\begin{definition}[The axioms and deduction rules of LiiP]\label{definition:AxiomsRules}
Let
	\begin{itemize}
		\item $\Gamma_{0}$ designate an adequate set of axioms for classical propositional logic
		\item $\Gamma_{1} \defeq \Gamma_{0} \cup \{$
			\begin{itemize}
				\item $\knows{a}{a}$\quad(knowledge of one's own name string)
				\item $\knows{a}{M}\limp\knows{a}{\sign{M}{a}}$\quad(\emph{personal} [the same $a$] signature \emph{synthesis})
				\item $\knows{a}{\sign{M}{b}}\limp\knows{a}{\pair{M}{b}}$\quad(\emph{universal} [any $a$ and $b$] signature \emph{analysis})
				\item $(\knows{a}{M}\land\knows{a}{M'})\lequiv\knows{a}{\pair{M}{M'}}$\quad([un]pairing)
				\item \colorbox[gray]{0.75}{$(\proves{M}{(\phi\limp\phi')}{a}{\community})\limp
				((\proves{M}{\phi}{a}{\community})\limp\proves{M}{\phi'}{a}{\community})$}%
				\quad(Kripke's law, K)
				\item $(\proves{M}{\phi}{a}{\community})\limp(\knows{a}{M}\limp\phi)$%
				\quad(epistemic truthfulness)
				\item \colorbox[gray]{0.75}{$\bigwedge_{b\in\community\cup\set{a}}((\underbrace{\proves{\pair{M}{b}}{\phi}{a}{\community}}_{\text{can prove}})\limp\proves{\sign{M}{a}}{(\underbrace{\knows{a}{M}\land\proves{M}{\phi}{a}{\community}}_{\text{does prove}})}{b}{\community\cup\set{a}})$}\newline
			(nominal [in $b$] peer review)
				\item $(\proves{M}{\phi}{a}{\community\cup\community'})\limp
												\proves{M}{\phi}{a}{\community}$\quad(group decomposition) \}
			\end{itemize}
			designate a set of \emph{axiom schemas.} 
	\end{itemize}
	Then, $\colorbox[gray]{0.75}{$\LiiP\defeq\Clo{}{}(\emptyset)$}\defeq\bigcup_{n\in\mathbb{N}}\Clo{}{n}(\emptyset)$, where 
		for all $\Gamma\subseteq\pFormulas:$
		\begin{eqnarray*}
					\Clo{}{0}(\Gamma) &\defeq& \Gamma_{1}\cup\Gamma\\
					\Clo{}{n+1}(\Gamma) &\defeq& 
						\begin{array}[t]{@{}l@{}}
							\Clo{}{n}(\Gamma)\ \cup\\
							\setst{\phi'}{\set{\phi,\phi\limp\phi'}\subseteq\Clo{}{n}(\Gamma)}\cup
								\quad\text{(\emph{modus ponens}, MP)}\\
							\setst{\proves{M}{\phi}{a}{\community}}{\phi\in\Clo{}{n}(\Gamma)}\cup
								\quad\text{(necessitation, N)}\\
							\colorbox[gray]{0.75}{$\setst{(\proves{M}{\phi}{a}{\community})\lequiv\proves{M'}{\phi}{a}{\community}}{(\knows{a}{M}\lequiv\knows{a}{M'})\in\Clo{}{n}(\Gamma)}$}\\
							\quad\text{(epistemic bitonicity)}.\label{page:EpistemicAntitonicity}
						\end{array}
				\end{eqnarray*}
		We call $\LiiP$ a \emph{base theory,} and
		$\Clo{}{}(\Gamma)$ an \emph{LiiP-theory} for any $\Gamma\subseteq\pFormulas$.
\end{definition}
\noindent
Notice the 
	logical order of LiiP, which is,
		due to propositions about (proofs of) propositions, \emph{higher-order propositional}.
Further, observe that 
	we assume
		the existence of a dependable mechanism for signing messages, which
		we model with the above synthesis and analysis axioms.\label{page:UnforgeableSignatures}
In \emph{trusted} multi-agent systems, signatures are unforg\emph{ed,} and thus
	such a mechanism is trivially given by 
		the inclusion of the sender's name in the sent message, or by
		the sender's sensorial impression on the receiver when communication is immediate.
In \emph{dis}trusted multi-agent systems (\eg the open Internet), 
	a practically unforge\emph{able} signature mechanism can be implemented with  
		classical \emph{certificate-based} or, more directly, with 
		\emph{identity-based} public-key cryptography \cite{DigitalSignatures}.
We also assume the existence of a pairing mechanism modelling finite sets.
Such a mechanism is required by the important application of
	communication (not only cryptographic) protocols \cite[Chapter~3]{SecurityEngineering}, in which
		concatenation of high-level data packets is associative, commutative, and idempotent.	
The key to the validity of K is that 
	we understand interactive proofs as \emph{sufficient evidence} for
		intended resource-unbounded proof-checking agents (who are though still unable to guess), 
			see \cite[Section~3.2.2]{LiP} for more details.
Next, the significance of epistemic truthfulness to interactivity is that
	in truly distributed multi-agent systems, 
		not all proofs are known by all agents, \ie 
			agents are not omniscient with respect to messages. 
Otherwise, why communicate with each other?
So there being a proof does not imply knowledge of that proof. 
When an agent $a$ does not know the proof and 
the agent cannot generate the proof \emph{ex nihilo} herself by guessing it, 
only communication from a peer, who thus acts as an oracle, can entail the knowledge of the proof with $a$. 
That is, provability and truth are necessarily concomitant in the non-interactive setting, 
whereas in interactive settings they are not necessarily so \cite{LiP}.
In nominal peer review,
	``can prove'' suggests the proof \emph{potentiality} of $\pair{M}{b}:$ 
		``if $a$ \emph{were to} know, \eg receive, $\pair{M}{b}$'' 
			(and thus know her potential interlocutor $b$'s name).
Whereas given $\sign{M}{a}$ to $b$, \eg 
	in an acknowledgement from $a$, 
		``does prove'' suggests the proof \emph{actuality} of $M:$
			``$a$ \emph{does} know, \eg did receive, $\pair{M}{b}$'', otherwise
				$a$ could not have signed $M$.
See the proof of Corollary~\ref{corollary:ConcreteAccessibility}.5
	for a semantic justification of the \emph{raison d'\^{e}tre} of $b$ in $\pair{M}{b}$.
Then, the justification for the necessitation rule (schema) is that
	in interactive settings,
		validities, and thus \emph{a fortiori} tautologies 
			(in the strict sense of validities of the propositional fragment), 
				are in some sense trivialities \cite{LiP}.
To see why,
	recall that
		modal validities are true in \emph{all} pointed models 
			(\cf Definition~\ref{definition:TruthValidity}), and thus
			not worth being communicated from one point to another in a given model, \eg 
				by means of specific interactive proofs.
(Nothing is logically more embarrassing than  
	talking in tautologies.)
Therefore, 
	validities deserve \emph{arbitrary} proofs.
What is worth being communicated are
	truths weaker than validities, namely 
		local truths in the standard model-theoretic sense 
			(\cf Definition~\ref{definition:TruthValidity}), which
				may not hold universally.
Otherwise why communicate with each other?
Finally,
	observe that
		epistemic bitonicity is a rule of \emph{logical modularity} that 
			allows the modular generation of structural modal laws from 
				equivalence term laws  
					(\cf Theorem~\ref{theorem:SomeUsefulDeducibleStructuralLaws}).

The grey-shading in Definition~\ref{definition:AxiomsRules} indicates that 
	the axioms and rules of LiiP 
		differ from those of LiP in exactly \label{page:LiiPvsLiP}
			Kripke's law, 
			nominal peer review, and 
			epistemic bitonicity (\cf \cite{LiP} and Section~\ref{section:LiP}).
In LiP, 
	these three LiiP-laws correspond 
		  to 
			the \emph{generalised} Kripke-law 
				$(\pproves{M}{(\phi\limp\phi')}{a}{\community})\limp
					((\pproves{M'}{\phi}{a}{\community})\limp\pproves{\pair{M}{M'}}{\phi'}{a}{\community})$, 
			(plain) peer review
				$(\pproves{M}{\phi}{a}{\community})\limp\bigwedge_{b\in\community\cup\set{a}}(\pproves{\sign{M}{a}}{(\knows{a}{M}\land\pproves{M}{\phi}{a}{\community})}{b}{\community\cup\set{a}})$, and 
			epistemic \emph{anti}tonicity
				``from $\knows{a}{M}\limp\knows{a}{M'}$ 
					deduce $(\pproves{M'}{\phi}{a}{\community})\limp\pproves{M}{\phi}{a}{\community}$'', 
				respectively.
The addition of the axiom schema 
	$$\boxed{(\proves{M}{\phi}{a}{\community})\limp\proves{\pair{M}{M'}}{\phi}{a}{\community}}$$ to LiiP 
		will result in a logic LiiP$^{+}$ that is isomorphic to LiP (\cf Theorem~\ref{theorem:LiiPplus}).
So in some sense,
	the essential difference between 
		instant proofs (proofs for at least an instant) and 
		persistent proofs (proofs for eternity) is distilled in this single additional law.
Following Art\"emov in \cite{JustificationLogic},
	this law can be interpreted as 
		Lehrer and Paxson's indefeasibility condition for justified true belief 
			\cite{LiP}.
In sum, 
	while 
		both LiP-proofs and LiiP-proofs are indefeasible 
			in the instant when they are learnt 
				(they induce knowledge, not only belief), 
		LiiP-proofs (LiP-proofs) are possibly (necessarily) (in)defeasible in the future of
			the instant in which they are learnt.

Now note the following macro-definitions: 
	$\true \defeq \knows{a}{a}$, 
	$\false \defeq \neg \true$, 
	$\phi \lor \phi' \defeq \neg (\neg \phi \land \neg \phi')$,
	$\phi \limp \phi' \defeq \neg \phi \lor \phi'$, and 
	$\phi \lequiv \phi' \defeq (\phi \limp \phi') \land (\phi' \limp \phi)$. 
	%
In the sequel, ``:iff'' abbreviates ``by definition, if and only if''.
\begin{proposition}[Hilbert-style proof system]\label{proposition:Hilbert}
	Let 
		\begin{itemize}
			\item $\Phi\LiiPded\phi$ \text{:iff} if $\Phi\subseteq\LiiP$ then $\phi\in\LiiP$ 
			\item $\phi\LiiPdedBis\phi'$ \text{:iff} $\set{\phi}\LiiPded\phi'$ and $\set{\phi'}\LiiPded\phi$
			\item $\LiiPded\phi$ \text{:iff} $\emptyset\LiiPded\phi.$
		\end{itemize}
		In other words, ${\LiiPded}\subseteq\powerset{\pFormulas}\times\pFormulas$ is a \emph{system of closure conditions} in the sense of 
				\cite[Definition~3.7.4]{PracticalFoundationsOfMathematics}.
		For example:
			\begin{enumerate}
				\item for all axioms $\phi\in\Gamma_{1}$, $\LiiPded\phi$
				\item for \emph{modus ponens}, $\set{\phi,\phi\limp\phi'}\LiiPded\phi'$
				\item for necessitation, $\set{\phi}\LiiPded\proves{M}{\phi}{a}{\community}$
				\item for epistemic bitonicity,
					$\set{\knows{a}{M}\lequiv\knows{a}{M'}}\LiiPded
						(\proves{M}{\phi}{a}{\community})\lequiv\proves{M'}{\phi}{a}{\community}$.
			\end{enumerate}
		(In the space-saving, horizontal Hilbert-notation ``$\Phi\LiiPded\phi$'', 
			$\Phi$ is not a set of hypotheses but a set of premises, \cf  
				\emph{modus ponens,} necessitation, and epistemic bitonicity.)
	Then $\LiiPded$ can be viewed as being defined by 
		a $\Clo{}{}$-induced Hilbert-style proof system.
	In fact 
			${\Clo{}{}}:\powerset{\pFormulas}\rightarrow\powerset{\pFormulas}$ is a \emph{standard consequence operator,} \ie
				a \emph{substitution-invariant compact closure operator.}
\end{proposition}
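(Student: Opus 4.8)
The plan is to read off both assertions directly from the stage-wise definition of $\Clo{}{}$. For the first assertion --- that $\LiiPded$ arises from a $\Clo{}{}$-induced Hilbert system and is a system of closure conditions --- I would simply unfold the definitions: $\Phi\LiiPded\phi$ holds iff $\Phi\subseteq\LiiP$ implies $\phi\in\LiiP$, with $\LiiP=\Clo{}{}(\emptyset)$, while examples 1--4 exhibit the axioms together with the rules MP, N, and epistemic bitonicity as the generating closure conditions. Reflexivity and cut for $\LiiPded$ are then immediate from the extensiveness and idempotence of $\Clo{}{}$, which is exactly the content of the second assertion. So the real mathematical work is to show that $\Clo{}{}\colon\powerset{\pFormulas}\to\powerset{\pFormulas}$ is extensive, monotone, idempotent, finitary, and substitution-invariant.

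For the three closure-operator axioms and compactness I would isolate the one-step operator $F$ defined so that $\Clo{}{0}(\Gamma)=\Gamma_1\cup\Gamma$ and $\Clo{}{n+1}(\Gamma)=F(\Clo{}{n}(\Gamma))$, where $F(\Delta)$ adjoins to $\Delta$ all one-step consequences by MP, N, and epistemic bitonicity. Note that $F$ is monotone and finitary, since each of its three rules draws on only finitely many premises (two for MP, one each for N and bitonicity). Extensiveness is then immediate from $\Gamma\subseteq\Clo{}{0}(\Gamma)$. Monotonicity follows by induction on $n$: the base case $\Clo{}{0}(\Gamma)\subseteq\Clo{}{0}(\Gamma')$ holds because $\Gamma\subseteq\Gamma'$, and the step uses monotonicity of $F$. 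Compactness follows by a second induction on $n$ showing that every $\phi\in\Clo{}{n}(\Gamma)$ already lies in $\Clo{}{n}(\Gamma_0)$ for some finite $\Gamma_0\subseteq\Gamma$ --- axioms need no premises, and each rule application glues together the finitely many finite witnesses of its premises. Idempotence reduces, via extensiveness, to the inclusion $\Clo{}{}(\Clo{}{}(\Gamma))\subseteq\Clo{}{}(\Gamma)$, which I would prove by induction on the stage index after observing that $\Clo{}{}(\Gamma)$ contains $\Gamma_1$ and is closed under $F$ (any MP-, N-, or bitonicity-premises occurring in $\Clo{}{}(\Gamma)$ already occur at some finite stage, so their conclusion does too).

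The delicate point, and the step I expect to be the main obstacle, is substitution-invariance, i.e. $\sigma[\Clo{}{}(\Gamma)]\subseteq\Clo{}{}(\sigma[\Gamma])$ for every admissible substitution $\sigma$. I would again argue by induction on $n$ that $\sigma[\Clo{}{n}(\Gamma)]\subseteq\Clo{}{n}(\sigma[\Gamma])$. The inductive step is routine because each rule is structural: substitution commutes with the connectives and the proof modality, so e.g. from $\sigma\phi$ and $\sigma(\phi\limp\phi')=(\sigma\phi)\limp(\sigma\phi')$ one recovers $\sigma\phi'$ by MP, and likewise for N and for bitonicity via its side-condition on $\knows{a}{M}\lequiv\knows{a}{M'}$. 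The real burden sits in the base case $\sigma[\Gamma_1]\subseteq\Gamma_1$, which is precisely the statement that $\Gamma_1$ is a genuine set of schemas, so that a substitution instance of an axiom is again an instance of the same axiom. Here care is needed with the distinguished atoms $\knows{a}{M}$ occurring in the signature, (un)pairing, epistemic-truthfulness, and peer-review schemas, and with the term parameters $M$, $a$, $\community$: the admissible substitutions of the language must leave these structured atoms and parameters in place (acting through them, as on the proof modality) and replace only the ordinary propositional variables by formulas. Once this notion of substitution is pinned down so that schema-hood is manifestly preserved, $\sigma[\Gamma_1]\subseteq\Gamma_1$ is immediate and the induction goes through; and it is exactly this schematic, structural presentation of $\Gamma_1$ and of the three rules that makes $\Clo{}{}$ a standard consequence operator.
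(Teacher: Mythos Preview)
Your proposal is correct and matches the paper's approach: the paper's proof is only a three-line sketch that defers to \cite{LiP} and \cite{WhatIsALogicalSystem}, says the closure-operator properties can be ``verified by inspection of the inductive definition of $\Clo{}{}$,'' and notes that substitution invariance ``follows from our definitional use of axiom \emph{schemas}.'' You have simply carried out that inspection in full---your stage-wise inductions for extensiveness, monotonicity, compactness, and idempotence, and your identification of $\sigma[\Gamma_1]\subseteq\Gamma_1$ (schema-hood) as the crux of substitution invariance, are exactly the details the paper leaves implicit.
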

\begin{proof}
	Like in \cite{LiP}.
	That a Hilbert-style proof system can be viewed as induced by 
		 a compact closure operator is well-known (\eg see \cite{WhatIsALogicalSystem});
	that $\Clo{}{}$ is indeed such an operator can be verified by 
		inspection of the inductive definition of $\Clo{}{}$; and
	substitution invariance follows from our definitional use of axiom \emph{schemas}.\footnote{%
		Alternatively to axiom schemas,
		we could have used 
			axioms together with an
			additional substitution-rule set
				$\setst{\sigma[\phi]}{\phi\in\Clo{}{n}(\Gamma)}$
		in the definiens of $\Clo{}{n+1}(\Gamma)$.}
\end{proof}

\begin{corollary}[Normality]\label{corollary:Normality}
		LiiP is a normal modal logic.
\end{corollary}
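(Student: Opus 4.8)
The plan is to verify, by direct inspection of Definition~\ref{definition:AxiomsRules} together with Proposition~\ref{proposition:Hilbert}, that LiiP satisfies each of the defining conditions of a normal modal logic, reading the proof construct $\proves{M}{\cdot}{a}{\community}$ as the box modality. Recall that a logic counts as \emph{normal} precisely when it (i)~contains all theorems of classical propositional logic, (ii)~contains every instance of Kripke's law~K, (iii)~is closed under \emph{modus ponens}, (iv)~is closed under necessitation, and (v)~is substitution-invariant. The whole content of the corollary is that these five properties are already guaranteed by the way $\LiiP$ was defined, so the argument is a short bookkeeping check rather than a genuine construction.

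First I would dispatch~(i) by observing that $\Gamma_{0}\subseteq\Gamma_{1}=\Clo{}{0}(\emptyset)$, so every propositional axiom, and hence via the (MP)-clause every propositional tautology, lies in $\LiiP$. Condition~(ii) is immediate, since Kripke's law~K is literally one of the (grey-shaded) axiom schemas collected in $\Gamma_{1}$. Conditions~(iii) and~(iv) are likewise built directly into the definiens of $\Clo{}{n+1}$: the clause labelled (MP) witnesses closure under \emph{modus ponens}, and the clause labelled (N) witnesses closure under necessitation for theorems, which is exactly the restricted rule required of a normal logic. Finally,~(v) has already been established in Proposition~\ref{proposition:Hilbert}, where $\Clo{}{}$ is shown to be a \emph{substitution-invariant} compact closure operator, this invariance following from our definitional use of axiom \emph{schemas}.

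The only point that deserves a word of care is that LiiP carries not a single box but a whole family of modalities indexed by the triples $(M,a,\community)$, so normality must be read in the multi-modal sense, \ie as the claim that \emph{each} such operator is normal. But this is exactly what the schematic form of~K and of~(N) delivers: since $M$, $a$, and $\community$ appear as free schematic parameters, instantiating them to any fixed triple yields the single-modality K-axiom and necessitation rule for that operator. I therefore expect no real obstacle; the substance of the proof is simply the observation that the non-standard ingredients of LiiP (epistemic truthfulness, nominal peer review, epistemic bitonicity, and group decomposition) sit as \emph{additions} on top of an already normal core and do not interfere with conditions~(i)--(v).
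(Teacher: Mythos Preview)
Your proposal is correct and follows essentially the same approach as the paper's own proof, which simply cites Kripke's law, \emph{modus ponens}, and necessitation (all by definition) together with the substitution invariance established in Proposition~\ref{proposition:Hilbert}. Your version is more explicit---spelling out the propositional base and the multi-modal reading---but the substance is identical.
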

\begin{proof}
	Jointly by  
		Kripke's law, 
		\emph{modus ponens},
		necessitation (these by definition), and
		substitution invariance (\cf Proposition~\ref{proposition:Hilbert}).
\end{proof}

We are now going to present some useful (further-used), 
	deducible \emph{structural} laws of LiiP.
Here, 
	``structural'' means 
	``deducible exclusively from term axioms''.
The laws are enumerated in a (total) order that respects (but cannot reflect) their respective proof prerequisites.
The laws are also deducible in LiP, in the same order \cite{LiP}.
(All LiiP-deducible laws are also LiP-deducible, but not vice versa.)
\begin{theorem}[Some useful deducible structural laws]\label{theorem:SomeUsefulDeducibleStructuralLaws}\ 
	\begin{enumerate}
		\item $\LiiPded\knows{a}{\pair{M}{M'}}\limp\knows{a}{M}$\quad(left projection, 
				1-way $\Kcomb$-combinator property)
		\item $\LiiPded\knows{a}{\pair{M}{M'}}\limp\knows{a}{M'}$\quad(right projection)
		\item $\LiiPded\knows{a}{\pair{M}{M}}\lequiv\knows{a}{M}$\quad(pairing idempotency)
		\item $\LiiPded\knows{a}{\pair{M}{M'}}\lequiv\knows{a}{\pair{M'}{M}}$\quad(pairing commutativity)
		\item $\LiiPded(\knows{a}{M}\limp\knows{a}{M'})\lequiv(\knows{a}{\pair{M}{M'}}\lequiv\knows{a}{M})$\quad(neutral pair elements)
		\item $\LiiPded\knows{a}{\pair{M}{a}}\lequiv\knows{a}{M}$\quad(self-neutral pair element)
		\item $\LiiPded\knows{a}{\pair{M}{\pair{M'}{M''}}}\lequiv\knows{a}{\pair{\pair{M}{M'}}{M''}}$\quad(pairing associativity)
		\item $\LiiPded(\proves{\pair{M}{M}}{\phi}{a}{\community})\lequiv\proves{M}{\phi}{a}{\community}$\quad(proof idempotency)
		\item $\LiiPded(\proves{\pair{M}{M'}}{\phi}{a}{\community})\lequiv\proves{\pair{M'}{M}}{\phi}{a}{\community}$\quad(proof commutativity)
		\item $\set{\knows{a}{M}\limp\knows{a}{M'}}\LiiPded
					(\proves{\pair{M}{M'}}{\phi}{a}{\community})\lequiv\proves{M}{\phi}{a}{\community}$\quad(neutral proof elements)
		\item $\LiiPded(\proves{\pair{M}{a}}{\phi}{a}{\community})\lequiv\proves{M}{\phi}{a}{\community}$\quad(self-neutral proof element)
		\item $\LiiPded(\proves{\pair{M}{\pair{M'}{M''}}}{\phi}{a}{\community})\lequiv\proves{\pair{\pair{M}{M'}}{M''}}{\phi}{a}{\community}$\quad(proof associativity)
		\item $\LiiPded(\proves{\sign{M}{a}}{\phi}{a}{\community})\lequiv\proves{M}{\phi}{a}{\community}$\quad(self-signing idempotency)
	\end{enumerate}
\end{theorem}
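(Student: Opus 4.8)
The plan is to prove the thirteen laws in the stated order, exploiting a clean two-layer structure. Laws 1--7 concern only individual knowledge (the $\knows{}{}$ predicate) and will follow from the term axioms in $\Gamma_{1}$ (name knowledge, signature synthesis/analysis, and (un)pairing) together with classical propositional reasoning available via $\Gamma_{0}$, \emph{modus ponens}, and substitution invariance (\cf Proposition~\ref{proposition:Hilbert}). Laws 8--13 concern the proof modality $\proves{}{}{}{}$ and will each be obtained from a corresponding knowledge-equivalence of the first layer by a single application of \emph{epistemic bitonicity}. This is exactly the ``logical modularity'' advertised after Definition~\ref{definition:AxiomsRules}: every structural law about proofs is distilled from a structural law about the individual knowledge of the underlying proof terms.

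For the knowledge layer, I would first read off left and right projection (1,2) from the left-to-right half of the (un)pairing axiom $(\knows{a}{M}\land\knows{a}{M'})\lequiv\knows{a}{\pair{M}{M'}}$ together with conjunction elimination. Pairing idempotency (3) and commutativity (4) then follow by instantiating and chaining the (un)pairing equivalence, using idempotency and commutativity of $\land$; associativity (7) is the same move with associativity of $\land$. Neutral pair elements (5) is the one law in this layer needing a moment's care: rewriting $\knows{a}{\pair{M}{M'}}$ as $\knows{a}{M}\land\knows{a}{M'}$ via (un)pairing, the claim reduces to the propositional tautology $(P\limp Q)\lequiv((P\land Q)\lequiv P)$, which settles both directions of the outer biconditional at once. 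Self-neutral pair element (6) is then the instance $M'\defeq a$ of (5), whose hypothesis $\knows{a}{M}\limp\knows{a}{a}$ is discharged because $\knows{a}{a}$ is an axiom.

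For the proof layer, each of 8, 9, 11, 12 is immediate: I feed the matching knowledge-equivalence (laws 3, 4, 6, 7 respectively) as the premise of epistemic bitonicity to obtain the corresponding proof-equivalence. Neutral proof elements (10) is the derived-rule analogue: from the hypothesis $\knows{a}{M}\limp\knows{a}{M'}$ I apply law 5 to get $\knows{a}{\pair{M}{M'}}\lequiv\knows{a}{M}$ and then epistemic bitonicity. The only law requiring genuine assembly is self-signing idempotency (13), where the premise $\knows{a}{\sign{M}{a}}\lequiv\knows{a}{M}$ is not an axiom but must be built: the forward direction is exactly personal signature synthesis, while the backward direction chains universal signature analysis at $b\defeq a$, namely $\knows{a}{\sign{M}{a}}\limp\knows{a}{\pair{M}{a}}$, with one half of self-neutral pair element (6); a final application of epistemic bitonicity then yields (13).

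I expect no deep obstacle, since epistemic bitonicity performs all the modal lifting and reduces every proof-level statement to a purely term-level equivalence. The only points demanding attention are the propositional bookkeeping inside law 5 and the explicit construction of the signing equivalence feeding law 13; everything else is routine instantiation and chaining of established equivalences.
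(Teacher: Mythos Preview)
Your proposal is correct and follows essentially the same approach as the paper: laws 1--7 (and the signing equivalence underlying 13) are derived from the term axioms by propositional reasoning, and laws 8--13 are then obtained by feeding the corresponding knowledge equivalences into epistemic bitonicity, with law 10 handled exactly as in the paper via law 5. One cosmetic slip: in your sketch for (13) you have the direction labels swapped---personal signature synthesis gives $\knows{a}{M}\limp\knows{a}{\sign{M}{a}}$, while the analysis-plus-(6) chain gives $\knows{a}{\sign{M}{a}}\limp\knows{a}{M}$---but the argument itself is sound.
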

\begin{proof} 
	Laws~1--7 and 13 are proved like in LiP \cite{LiP}, as   
		LiiP and LiP have identical term axioms.
	Law~8, 9, 11, and 12 follows immediately from Law~3, 4, 6, and 7, respectively by 
		epistemic bitonicity.
	For Law~10, 
		suppose that 
			$\LiiPded\knows{a}{M}\limp\knows{a}{M'}$.
		Hence $\LiiPded\knows{a}{\pair{M}{M'}}\lequiv\knows{a}{M}$ by 
			the law of neutral pair elements and
			propositional logic.
		Hence $\LiiPded\proves{\pair{M}{M'}}{\phi}{a}{\community}\lequiv\proves{M}{\phi}{a}{\community}$ by
			epistemic bitonicity.
\end{proof}
Like in LiP \cite{LiP}, 
	the preceding 1-way $\Kcomb$-combinator property and 
	the following simple corollary of Theorem~\ref{theorem:SomeUsefulDeducibleStructuralLaws} jointly establish the important fact that
		our communicating agents can be viewed as combinators in the sense of Combinatory Logic 
			viewed in turn as a (non-equational) theory of (message or proof) term reduction \cite{LambdaCalculusAndCombinators}.
(The converse of the above $\Kcomb$-combinator property does not hold.)
\begin{corollary}[$\Scomb$-combinator property]\ 
	\begin{enumerate}
		\item $\LiiPded\knows{a}{\pair{\pair{M}{M'}}{M''}}\lequiv\knows{a}{\pair{M}{\pair{M''}{\pair{M'}{M''}}}}$
		\item $\LiiPded(\proves{\pair{\pair{M}{M'}}{M''}}{\phi}{a}{\community})\lequiv
						\proves{\pair{M}{\pair{M''}{\pair{M'}{M''}}}}{\phi}{a}{\community}$
	\end{enumerate}
\end{corollary}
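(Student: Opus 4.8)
The plan is to prove the two equivalences by the same two-step pattern that yielded Laws~8, 9, 11, and 12 of Theorem~\ref{theorem:SomeUsefulDeducibleStructuralLaws} from their knowledge-level counterparts: first establish the individual-knowledge equivalence of Part~1, and then transfer it to the proof modality in Part~2 by a single application of epistemic bitonicity.

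For Part~1, I would repeatedly apply the (un)pairing axiom $(\knows{a}{M}\land\knows{a}{M'})\lequiv\knows{a}{\pair{M}{M'}}$ to decompose each side into a conjunction of atomic individual-knowledge statements. The left-hand term $\pair{\pair{M}{M'}}{M''}$ unfolds to $\knows{a}{M}\land\knows{a}{M'}\land\knows{a}{M''}$, whereas the right-hand term $\pair{M}{\pair{M''}{\pair{M'}{M''}}}$ unfolds to $\knows{a}{M}\land\knows{a}{M''}\land\knows{a}{M'}\land\knows{a}{M''}$. Propositional reasoning (commutativity and, crucially, idempotency of conjunction) collapses the duplicate occurrence of $\knows{a}{M''}$ on the right, so both conjunctions coincide and the equivalence follows. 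Equivalently, one could rewrite one term into the other using pairing associativity, commutativity, and idempotency (Laws~7, 4, and 3), but the atom-decomposition route is the most transparent.

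For Part~2, having the knowledge equivalence $\knows{a}{\pair{\pair{M}{M'}}{M''}}\lequiv\knows{a}{\pair{M}{\pair{M''}{\pair{M'}{M''}}}}$ in hand from Part~1, I would instantiate epistemic bitonicity with precisely these two proof terms, which yields the desired equivalence $(\proves{\pair{\pair{M}{M'}}{M''}}{\phi}{a}{\community})\lequiv\proves{\pair{M}{\pair{M''}{\pair{M'}{M''}}}}{\phi}{a}{\community}$ directly.

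The main obstacle here is conceptual rather than technical: one must notice that the $\Scomb$-combinator is the \emph{duplicating} combinator ($S\,x\,y\,z = x\,z\,(y\,z)$, duplicating its third argument, here $M''$), and that this duplication is absorbed precisely because our pairing models finite sets and is therefore \emph{idempotent} at the level of individual knowledge (Law~3). No deeper difficulty arises, since every step is an instance of the (un)pairing axiom, propositional logic, or epistemic bitonicity.
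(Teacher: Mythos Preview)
Your proposal is correct and essentially matches the paper's proof: the paper derives Part~1 ``jointly from idempotency (copy $M''$), commutativity, and associativity of pairing'' (i.e., Laws~3, 4, and 7), which is exactly the alternative you mention, and derives Part~2 from Part~1 by epistemic bitonicity, just as you do. Your primary route for Part~1---unfolding both sides via the (un)pairing axiom into conjunctions of atomic $\knows{a}{\cdot}$-formulas and then collapsing the duplicate $\knows{a}{M''}$ by propositional idempotency---is a slightly more elementary presentation of the same argument, since Laws~3, 4, and 7 are themselves immediate consequences of (un)pairing plus propositional logic.
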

\begin{proof}
	1 follows jointly from idempotency (copy $M'''$), commutativity, and associativity of pairing; and
	2 follows jointly from 1 and epistemic bitonicity.
\end{proof}

We are going to present also some useful (further-used) 
	deducible \emph{logical} laws of LiiP.
Here, 
	``logical'' means 
	``not structural'' in the previously defined sense.
Also these laws 
	are enumerated in an order that respects their respective proof prerequisites, and
	are deducible in LiP in the same order \cite{LiP}.
\begin{theorem}[Some useful deducible logical laws]\label{theorem:SomeUsefulDeducibleLogicalLaws}\ 
		\begin{enumerate}
		\item $\set{\phi\limp\phi'}\LiiPded(\proves{M}{\phi}{a}{\community})\limp\proves{M}{\phi'}{a}{\community}$\quad(regularity)		
		\item $\set{\knows{a}{M}\lequiv\knows{a}{M'},\phi\limp\phi'}\LiiPded(\proves{M}{\phi}{a}{\community})\limp\proves{M'}{\phi'}{a}{\community}$ (biepistemic regul.)
		\item $\LiiPded((\proves{M}{\phi}{a}{\community})\land
						\proves{M}{\phi'}{a}{\community})\lequiv\proves{M}{(\phi\land\phi')}{a}{\community}$\quad(proof conjunctions \emph{bis})
		\item $\LiiPded((\proves{M}{\phi}{a}{\community})\lor\proves{M}{\phi'}{a}{\community})\limp
						\proves{M}{(\phi\vee\phi')}{a}{\community}$\quad(proof disjunctions \emph{bis})
		\item $\LiiPded\proves{M}{\true}{a}{\community}$\quad(anything can prove tautological truth)
		\item $\LiiPded\proves{\sign{M}{b}}{\knows{b}{M}}{a}{\community\cup\set{b}}$\quad(authentic knowledge)
		\item $\LiiPded\proves{M}{\knows{a}{M}}{a}{\emptyset}$\quad(self-knowledge)
		\item $\LiiPded(\proves{M}{\phi}{a}{\community\cup\community'})\limp
				((\proves{M}{\phi}{a}{\community})\land\proves{M}{\phi}{a}{\community'})$\quad(group decomposition \emph{bis})
		\item $\LiiPded(\proves{M}{\phi}{a}{\community\cup\set{a}})\lequiv(\proves{M}{\phi}{a}{\community})$
					\quad(self-neutral group element).
		\item $\LiPded\proves{M}{((\proves{M}{\phi}{a}{\community})\limp\phi)}{a}{\community}$\quad
				(self-proof of truthfulness)
		\item $\LiPded\proves{M}{(\neg(\proves{M}{\false}{a}{\community}))}{a}{\community}$\quad
				(self-proof of proof consistency)
		\item $\LiPded(\proves{M}{(\proves{M}{\phi}{a}{\community})}{a}{\community})\lequiv
				\proves{M}{\phi}{a}{\community}$\quad(modal idempotency)
\end{enumerate}
\end{theorem}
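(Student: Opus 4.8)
I would split the twelve laws into three blocks and exploit throughout that, for each fixed $M,a,\community$, the operator $\proves{M}{\cdot}{a}{\community}$ is a normal modal box (Corollary~\ref{corollary:Normality}: K, N, MP plus substitution invariance). Laws~1--5 are then the routine normal-modal facts. For regularity (Law~1), from the premise $\phi\limp\phi'$ I would apply necessitation to get $\proves{M}{(\phi\limp\phi')}{a}{\community}$ and discharge it against Kripke's law by \emph{modus ponens}. Biepistemic regularity (Law~2) chains Law~1 with epistemic bitonicity, which exchanges the proof term $M$ for $M'$ under $\knows{a}{M}\lequiv\knows{a}{M'}$. Proof conjunctions (Law~3) come from the tautologies $\phi\limp(\phi'\limp(\phi\land\phi'))$ and $(\phi\land\phi')\limp\phi$ via Law~1 and K; proof disjunctions (Law~4) from $\phi\limp(\phi\lor\phi')$ via Law~1 and disjunction elimination; and verum (Law~5) simply by necessitating the axiom $\true\defeq\knows{a}{a}$.

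The second block, Laws~6--9, draws on the term- and group-indexed axioms. Authentic knowledge (Law~6) I would obtain by instantiating nominal peer review (with signer $b$) at the trivial goal $\true$: the ``can prove'' antecedent is discharged by Law~5, and the ``does prove'' conclusion $\proves{\sign{M}{b}}{(\knows{b}{M}\land\proves{M}{\true}{b}{\community})}{a}{\community\cup\set{b}}$ collapses to $\proves{\sign{M}{b}}{\knows{b}{M}}{a}{\community\cup\set{b}}$ by Law~5 again and regularity. Self-knowledge (Law~7) then follows by taking $b=a$, turning $\sign{M}{a}$ back into $M$ via self-signing idempotency (Theorem~\ref{theorem:SomeUsefulDeducibleStructuralLaws}.13), and shrinking the community to $\emptyset$ by the group-decomposition axiom. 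Group decomposition \emph{bis} (Law~8) is two uses of that axiom (to $\community$ and, after rearranging the union, to $\community'$); the forward half of self-neutral group element (Law~9) is the instance with $\community'=\set{a}$. The converse half of Law~9 --- re-adding the pointed agent $a$ --- is the one delicate LiiP step, and I would justify it by the fact that the pointed community of $\proves{M}{\phi}{a}{\community}$ already contains $a$, mirroring the LiP derivation.

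The third block, Laws~10--12, is stated over $\LiPded$ rather than $\LiiPded$, and deliberately so. In LiP I would prove positive introspection, the $\Limp$-half of modal idempotency (Law~12), from LiP's \emph{plain} peer review at $b=a$, collapsing $\sign{M}{a}$, the group $\community\cup\set{a}$, and the surplus $\knows{a}{M}$ conjunct by self-signing idempotency, self-neutral group element, and regularity; the opposite half follows from self-proof of truthfulness (Law~10) together with K and MP. Law~10 internalises epistemic truthfulness under $M$ itself, and self-proof of proof consistency (Law~11) is its instance at $\phi=\false$ composed with propositional logic.

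\textbf{Main obstacle.} The crux is the boundary between the two logics. Laws~10--12 nest the \emph{same} proof term $M$ inside its own modality, which presupposes that the proving power of $M$ is stable under learning further data --- exactly the persistence that LiiP withholds and LiP supplies, via the monotonicity axiom $(\proves{M}{\phi}{a}{\community})\limp\proves{\pair{M}{M'}}{\phi}{a}{\community}$ and the stronger plain peer review. The hard part is getting the introspection of Law~12 through and being precise about why the weaker nominal ``can prove''/``does prove'' peer review of LiiP cannot reproduce it, so that the switch from $\LiiPded$ to $\LiPded$ is forced rather than incidental. Within the LiiP fragment itself, the only comparably subtle point is the converse direction of self-neutral group element (Law~9).
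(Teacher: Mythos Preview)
Your treatment of Laws~1--8 is sound and matches the intended route (the paper's own proof is the one-liner ``Like in LiP''). You are also right to flag the converse of Law~9 as the subtle step, but it can be made explicit in LiiP: instantiate nominal peer review at $b=a$, simplify the antecedent $\proves{\pair{M}{a}}{\phi}{a}{\community}$ and the proof term $\sign{M}{a}$ via self-neutral proof element and self-signing idempotency (Theorem~\ref{theorem:SomeUsefulDeducibleStructuralLaws}.11 and .13), and then reduce the consequent $\proves{M}{(\knows{a}{M}\land\proves{M}{\phi}{a}{\community})}{a}{\community\cup\set{a}}$ to $\proves{M}{\phi}{a}{\community\cup\set{a}}$ using epistemic truthfulness plus regularity.

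Where your proposal goes astray is the third block. The occurrences of $\LiPded$ in items~10--12 are a typographical slip for $\LiiPded$: the modality symbol used is the LiiP one, the theorem is announced as listing ``deducible logical laws of LiiP,'' and the surrounding text (``All LiiP-deducible laws are also LiP-deducible''; ``Like in LiP, the key to the validity of modal idempotency\ldots'') only makes sense if these are LiiP theorems. More substantively, your claim that nominal peer review ``cannot reproduce'' the introspection half of Law~12 is false. The $b=a$ instance of nominal peer review, after the same structural clean-up as above, is exactly the $b=a$ instance of plain peer review; projecting out $\proves{M}{\phi}{a}{\community}$ from the conjunct and applying Law~9 gives positive introspection in LiiP. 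Law~10 is likewise an LiiP theorem: Law~6 at $b=a$ together with self-signing idempotency and group decomposition yields $\proves{M}{\knows{a}{M}}{a}{\community}$ for arbitrary $\community$, and regularity over the epistemic-truthfulness consequence $\knows{a}{M}\limp((\proves{M}{\phi}{a}{\community})\limp\phi)$ finishes it; Law~11 is the $\phi=\false$ instance. So there is no LiiP/LiP boundary to police here, and the ``main obstacle'' you describe does not arise.
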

\begin{proof}
	Like in LiP \cite{LiP}.
\end{proof}
Like in LiP,
	the key to the validity of modal idempotency is that 
		each agent (\eg $a$) can act herself as proof-checker, 
			see \cite[Section~3.2.2]{LiP} for more details.

We now continue to 
	(re)present the constructive semantics for LiiP (\cf \cite[Section~2.2]{LiP}) and
	establish some important new and further-used 
		results about it.
The essential differences to the semantics of LiP are grey-shaded.
\begin{definition}[Semantic ingredients]\label{definition:SemanticIngredients}
For the knowledge-constructive model-theoretic study of LiiP let 
\begin{itemize}
	\item $\states$ designate the \emph{state space}---a set of \emph{system states} $s$
		\item $\msgs{a}:\states\rightarrow\powerset{\messages}$ designate 
			a \emph{raw-data extractor} that 
				extracts (without analysing) the (finite) set of messages from a system state $s$ that
					agent $a\in\agents$ has 
						either generated (assuming that only $a$ can generate $a$'s signature) or else 
						received \emph{as such} (not only as a strict subterm of another message)\label{page:RawData}; that is, $\msgs{a}(s)$ is $a$'s \emph{data base} in $s$
		\item $\clo{a}{s}:\powerset{\messages}\rightarrow\powerset{\messages}$ designate a \emph{data-mining operator} such that \label{page:DataMining}
			$\clo{a}{s}(\data)\defeq\clo{a}{}(\msgs{a}(s)\cup\data)\defeq\bigcup_{n\in\mathbb{N}}\clo{a}{n}(\msgs{a}(s)\cup\data)$, where for all $\data\subseteq\messages$:
				\begin{eqnarray*}
					\clo{a}{0}(\data) &\defeq& \set{a}\cup\data\\
					\clo{a}{n+1}(\data) &\defeq& 
						\begin{array}[t]{@{}l@{}}
							\clo{a}{n}(\data)\ \cup\\
							\setst{\pair{M}{M'}}{\set{M,M'}\subseteq\clo{a}{n}(\data)}\cup\quad\text{(pairing)}\\
							\setst{M, M'}{\pair{M}{M'}\in\clo{a}{n}(\data)}\cup\quad\text{(unpairing)}\\
							\setst{\sign{M}{a}}{M\in\clo{a}{n}(\data)}\cup\quad\text{(\emph{personal} signature \emph{synthesis})}\\
							\setst{\pair{M}{b}}{\sign{M}{b}\in\clo{a}{n}(\data)}\quad\text{(\emph{universal} signature \emph{analysis})}
						\end{array}
				\end{eqnarray*}
	\item \colorbox[gray]{0.75}{\parbox[t]{0.91\textwidth}{${\preorder{a}^{M}}\subseteq\states\times\states$ designate a 
	\emph{data preorder} on states such that
		for all $s,s'\in\states$,
			$s\preorder{a}^{M}s'$ :iff $\clo{a}{s}(\set{M})=\clo{a}{s'}(\emptyset)$,
				were $M$ can be viewed as \emph{oracle input} in addition to  
					$a$'s \emph{individual-knowledge base} $\clo{a}{s}(\emptyset)$ (\cf also \cite[Section~2.2]{LiP})}}
	\item \colorbox[gray]{0.75}{\parbox[t]{0.91\textwidth}{${\preorder{\community}^{M}}\defeq(\bigcup_{a\in\community}{\preorder{a}^{M}})^{++}$, where 
			`$^{++}$' designates the closure operation of so-called \emph{generalised transitivity}
				in the sense that 
					 ${\preorder{\community}^{M}}\circ{\preorder{\community}^{M'}}\subseteq{\preorder{\community}^{\pair{M}{M'}}}$}} 
	\item ${\indist{a}{}{}}\defeq{\preorder{a}^{a}}$ designate an equivalence relation of 
		\emph{state indistinguishability}
	\item ${\pAccess{M}{a}{\community}}\subseteq\states\times\states$ designate 
		a \emph{\textbf{concretely constructed} accessibility relation}---short, \emph{\textbf{concrete} accessibility}---for 
			the proof modality such that for all $s,s'\in\states$,  
			\begin{eqnarray*}
				s\pAccess{M}{a}{\community}s' &\defiff& 
								s'\in\hspace{-6ex} 
					\bigcup_{\scriptsize 
					\begin{array}{@{}c@{}}
						\text{\colorbox[gray]{0.75}{$s\preorder{\community\cup\set{a}}^{M}\tilde{s}$} and
						}\\[0.5\jot] 
						M\in\clo{a}{\tilde{s}}(\emptyset) 
					\end{array}
					}\hspace{-5.5ex}
					[\tilde{s}]_{\indist{a}{}{}}\\
				&\text{(iff}& 
					\text{there is $\tilde{s}\in\states$ \st 
							$s\preorder{\community\cup\set{a}}^{M}\tilde{s}$ and
							$M\in\clo{a}{\tilde{s}}(\emptyset)$ and
							$\indist{a}{\tilde{s}}{s'}$).}
			\end{eqnarray*}
\end{itemize}
\end{definition}
Note that
	the data-mining operator
		$\clo{a}{}:\powerset{\messages}\rightarrow\powerset{\messages}$ is a compact closure operator, which
		induces a \emph{data-derivation relation} ${\derives{a}{}{}}\subseteq\powerset{\messages}\times\messages$ such that $\derives{a}{M}{\data}$ :iff $M\in\clo{a}{}(\data)$, which  
	(1) has the compactness and (2) the cut property,  
	(3) is decidable in deterministic polynomial time in the size of $\data$ and $M$, and
	(4) induces a Scott information system of information tokens $M$ \cite{LiP}.
Fact~\ref{fact:KC} establishes the knowledge-constructiveness of 
	our Kripke-model for LiiP (\cf Definition~\ref{definition:KripkeModel}).
\begin{fact}[Kripke-model knowledge-constructiveness]\label{fact:KC}
	\begin{multline*}
		\text{for all $s'\in\states$, if $s\pAccess{M}{a}{\community}s'$
							then $(\aModalFrame, \mathcal{V}), s'\models\phi$ if and only if}\\
			\text{$\begin{array}[t]{@{}l@{}}
					\text{for all $\check{s}\in\states$, 
						if $s\preorder{\community\cup\set{a}}^{M}\check{s}$ 
					then $(\aModalFrame, \mathcal{V}), \check{s}\models\knows{a}{\hspace{-2.5ex}\underbrace{M}_{\text{\begin{tabular}{@{}c@{}}
						sufficient\\[-1\jot] 
						evidence
						\end{tabular}}}}\hspace{-2ex}\limp\K{a}(\hspace{-2.5ex}\underbrace{\phi}_{\hspace{0ex}\text{\begin{tabular}{@{}c@{}}
		induced\\[-1\jot] 
		knowledge
		\end{tabular}}}\hspace{-2.5ex})$}
				\end{array}$,}
					\end{multline*}
		where the standard epistemic modality $\K{a}$ is defined like in \cite{MultiAgents} as 
		 $$\begin{array}{@{}l@{}}
		 	(\aModalFrame, \mathcal{V}), \check{s}\models\K{a}(\phi)\quad\text{:iff}
		 	\quad\text{for all $s'\in\states$, if $\indist{a}{\check{s}}{s'}$ then 
				$(\aModalFrame, \mathcal{V}), s'\models\phi$.}
			\end{array}$$
\end{fact}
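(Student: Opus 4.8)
The plan is to read the displayed biconditional as the identity of the two universally quantified statements standing on either side of the ``if and only if'': on the left, the \emph{standard box-semantics} of the proof modality at $s$, quantifying over all concrete successors $s'$ with $s\pAccess{M}{a}{\community}s'$; on the right, its \emph{knowledge-constructive} reading, quantifying over the $\preorder{\community\cup\set{a}}^{M}$-successors $\check{s}$. First I would record the two pieces of semantics I need from Definition~\ref{definition:KripkeModel}: that the distinguished atom is valued by the individual-knowledge base, i.e. $(\aModalFrame,\mathcal{V}),\check{s}\models\knows{a}{M}$ iff $M\in\clo{a}{\check{s}}(\emptyset)$, and that $\K{a}$ is the indistinguishability box spelled out in the statement, i.e. $(\aModalFrame,\mathcal{V}),\check{s}\models\K{a}(\phi)$ iff $s''\models\phi$ for every $s''$ with $\indist{a}{\check{s}}{s''}$.

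Next I would unfold the right-hand side. By the two facts just recalled, the clause ``$\check{s}\models\knows{a}{M}\limp\K{a}(\phi)$'' becomes ``if $M\in\clo{a}{\check{s}}(\emptyset)$ then $s''\models\phi$ for all $s''$ with $\indist{a}{\check{s}}{s''}$'', so the whole right-hand side reads
\[
\text{for all }\check{s}\text{ and all }s'':\quad\bigl(s\preorder{\community\cup\set{a}}^{M}\check{s}\ \text{and}\ M\in\clo{a}{\check{s}}(\emptyset)\ \text{and}\ \indist{a}{\check{s}}{s''}\bigr)\ \text{implies}\ s''\models\phi.
\]
Then I would unfold the left-hand side by substituting the definition of the concrete accessibility from Definition~\ref{definition:SemanticIngredients}: since $s\pAccess{M}{a}{\community}s'$ holds iff there exists a witness $\check{s}$ with $s\preorder{\community\cup\set{a}}^{M}\check{s}$, $M\in\clo{a}{\check{s}}(\emptyset)$ and $\indist{a}{\check{s}}{s'}$, the left-hand side becomes
\[
\text{for all }s':\quad\bigl(\text{there is }\check{s}\text{ with }s\preorder{\community\cup\set{a}}^{M}\check{s}\ \text{and}\ M\in\clo{a}{\check{s}}(\emptyset)\ \text{and}\ \indist{a}{\check{s}}{s'}\bigr)\ \text{implies}\ s'\models\phi.
\]

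The two displays are now literally the same statement: pulling the existential over the witness $\check{s}$ out of the antecedent turns it into a leading universal, i.e. the first-order schema $\forall s'\,\bigl((\exists\check{s}\,P(\check{s},s'))\limp Q(s')\bigr)$ is equivalent to $\forall s'\,\forall\check{s}\,\bigl(P(\check{s},s')\limp Q(s')\bigr)$, and reordering the two universals is harmless. This single equivalence is the crux, and both directions of the ``if and only if'' fall out of it at once; concretely, the $\Lequiv$-direction that reads off $\phi$ at a successor uses the witness $\check{s}$ supplied by accessibility, while the converse instantiates the universal at that same $\check{s}$.

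The two unfoldings are routine; the one genuine point to get right is that the side-condition ``$M\in\clo{a}{\check{s}}(\emptyset)$'' occurring in the \emph{existential} witness clause on the left must line up with the \emph{antecedent} ``$\knows{a}{M}$'' of the implication on the right, which is exactly what the valuation of the distinguished atom delivers. Accordingly, I expect the main obstacle to be bookkeeping rather than depth: I must ensure the left-hand box is taken along the \emph{concrete} accessibility $\pAccess{M}{a}{\community}$ rather than the abstract $\access{M}{a}{\community}$, which is legitimate because the concrete relation realises the abstract one (\cf the realisation Corollary~\ref{corollary:ConcreteAccessibility}), and I must check that no further coherence constraint on $\mathcal{V}$ is silently invoked beyond the valuation of $\knows{a}{M}$ and the $S5$-style reading of $\K{a}$. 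Once those are in place, the equivalence is immediate.
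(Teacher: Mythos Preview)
Your proposal is correct and is precisely the approach the paper takes: its proof reads in full ``By elementary-logical transformations of the definiens of $\pAccess{M}{a}{\community}$,'' and your unfolding of both sides followed by the prenex manipulation $\forall s'\,((\exists\check{s}\,P)\limp Q)\ \Leftrightarrow\ \forall s'\,\forall\check{s}\,(P\limp Q)$ is exactly that transformation spelled out. Your caution about concrete versus abstract accessibility is unnecessary here since the Fact is stated directly for $\pAccess{M}{a}{\community}$, but it does no harm.
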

\begin{proof}
	By elementary-logical transformations of the definiens of $\pAccess{M}{a}{\community}$.
\end{proof}

\begin{lemma}\label{lemma:HKKHK}
	If $s\preorder{a}^{M}s'$ then $s'\preorder{a}^{M}s'$.
\end{lemma}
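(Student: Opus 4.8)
The plan is to unfold the definition of $\preorder{a}^{M}$ in terms of the data-mining operator and then reduce the claim to the elementary closure-operator fact that adjoining an already-derivable token leaves a closure unchanged. Writing $\mathrm{cl}$ for the compact closure operator $\clo{a}{}$ and recalling that $\clo{a}{s}(\data)=\mathrm{cl}(\msgs{a}(s)\cup\data)$, the hypothesis $s\preorder{a}^{M}s'$ unfolds to $\mathrm{cl}(\msgs{a}(s)\cup\set{M})=\mathrm{cl}(\msgs{a}(s'))$, whereas the desired conclusion $s'\preorder{a}^{M}s'$ unfolds to $\mathrm{cl}(\msgs{a}(s')\cup\set{M})=\mathrm{cl}(\msgs{a}(s'))$.

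First I would extract from the hypothesis the one fact that matters, namely that the oracle input $M$ already lies in $a$'s individual-knowledge base at $s'$. Since $\mathrm{cl}$ is extensive, $M\in\msgs{a}(s)\cup\set{M}\subseteq\mathrm{cl}(\msgs{a}(s)\cup\set{M})$, and by the hypothesis this last set equals $\mathrm{cl}(\msgs{a}(s'))=\clo{a}{s'}(\emptyset)$; hence $M\in\clo{a}{s'}(\emptyset)$.

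Next I would invoke the standard fact, stated right after Definition~\ref{definition:SemanticIngredients}, that $\mathrm{cl}$ is a closure operator, i.e.\ extensive, monotone, and idempotent, to close the argument. The inclusion $\mathrm{cl}(\msgs{a}(s'))\subseteq\mathrm{cl}(\msgs{a}(s')\cup\set{M})$ is immediate from monotonicity via $\msgs{a}(s')\subseteq\msgs{a}(s')\cup\set{M}$. For the reverse inclusion, from $M\in\mathrm{cl}(\msgs{a}(s'))$ together with $\msgs{a}(s')\subseteq\mathrm{cl}(\msgs{a}(s'))$ I get $\msgs{a}(s')\cup\set{M}\subseteq\mathrm{cl}(\msgs{a}(s'))$, whence monotonicity and then idempotency give $\mathrm{cl}(\msgs{a}(s')\cup\set{M})\subseteq\mathrm{cl}(\mathrm{cl}(\msgs{a}(s')))=\mathrm{cl}(\msgs{a}(s'))$. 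The two inclusions yield exactly the unfolded conclusion $s'\preorder{a}^{M}s'$.

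I do not expect any genuine obstacle here: the entire content is the bookkeeping of unfolding $\preorder{a}^{M}$ plus the one-line closure-operator observation. The only point demanding care is the \emph{asymmetric} roles of the oracle increment $\set{M}$ on the source state and the empty increment $\emptyset$ on the target state in the definition of $\preorder{a}^{M}$; once these are unfolded correctly, the seemingly reflexive conclusion follows precisely because the hypothesis has already absorbed $M$ into the right-hand knowledge base $\clo{a}{s'}(\emptyset)$.
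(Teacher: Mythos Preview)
Your proposal is correct and follows essentially the same route as the paper's proof: extract $M\in\clo{a}{s'}(\emptyset)$ from the hypothesis, then conclude $\clo{a}{s'}(\set{M})=\clo{a}{s'}(\emptyset)$. The paper's proof is a two-line sketch that leaves the closure-operator bookkeeping implicit (it just writes ``and thus''), whereas you spell out the extensiveness/monotonicity/idempotency steps explicitly; the underlying argument is the same.
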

\begin{proof}
	Consider that 
		when $s\preorder{a}^{M}s'$,
			$M\in\clo{a}{s'}(\emptyset)$, and thus
			 	$\clo{a}{s'}(\set{M})=\clo{a}{s'}(\emptyset)$.
\end{proof}

\begin{proposition}[Restricted reflexivity]\label{proposition:RestrictedReflexivity}\  
	\begin{enumerate}
		\item $s\preorder{a}^{a}s$\quad(self-reflexivity)
		\item biconditional reflexivity:
			\begin{enumerate}
				\item $s\preorder{a}^{M}s$ if and only if $M\in\clo{a}{s}(\emptyset)$
				\item $s\preorder{a}^{M}s$ if and only if there is $s'\in\states$ such that $s'\preorder{a}^{M}s$
			\end{enumerate}
	\end{enumerate}
\end{proposition}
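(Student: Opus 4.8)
The plan is to unfold the definition of the data preorder and then lean entirely on the fact, noted just after Definition~\ref{definition:SemanticIngredients}, that $\clo{a}{}$ is a closure operator (extensive, monotone, idempotent). Recall that $s\preorder{a}^{M}s'$ means $\clo{a}{s}(\set{M})=\clo{a}{s'}(\emptyset)$, where $\clo{a}{s}(\data)\defeq\clo{a}{}(\msgs{a}(s)\cup\data)$. Thus all three claims are equalities of closures of message sets, and each should collapse to one closure-operator law together with the base clause $\clo{a}{0}(\data)=\set{a}\cup\data$, which guarantees $a\in\clo{a}{s}(\emptyset)$ for every $s$.

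I would establish Part~2(a) first, since Part~1 is then the special case $M=a$. For the forward direction of 2(a), assume $s\preorder{a}^{M}s$, i.e. $\clo{a}{s}(\set{M})=\clo{a}{s}(\emptyset)$; by extensivity $M\in\clo{a}{s}(\set{M})$, hence $M\in\clo{a}{s}(\emptyset)$. For the backward direction, assume $M\in\clo{a}{s}(\emptyset)=\clo{a}{}(\msgs{a}(s))$; then $\msgs{a}(s)\cup\set{M}\subseteq\clo{a}{}(\msgs{a}(s))$, so by monotonicity and idempotency $\clo{a}{s}(\set{M})=\clo{a}{}(\msgs{a}(s)\cup\set{M})\subseteq\clo{a}{}(\msgs{a}(s))=\clo{a}{s}(\emptyset)$, while the reverse inclusion is immediate from monotonicity, giving $s\preorder{a}^{M}s$. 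Part~1 then follows by instantiating $M\defeq a$ and noting $a\in\clo{a}{0}(\msgs{a}(s))\subseteq\clo{a}{s}(\emptyset)$.

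For Part~2(b), the forward direction is trivial (take $s'\defeq s$). The backward direction is exactly Lemma~\ref{lemma:HKKHK} with the two states relabelled: from the hypothesis $s'\preorder{a}^{M}s$ the lemma yields $s\preorder{a}^{M}s$. Alternatively one can rerun the lemma's argument directly: $s'\preorder{a}^{M}s$ means $\clo{a}{s'}(\set{M})=\clo{a}{s}(\emptyset)$, and since $M\in\clo{a}{s'}(\set{M})$ by extensivity we obtain $M\in\clo{a}{s}(\emptyset)$, whence $s\preorder{a}^{M}s$ by Part~2(a).

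I do not expect a serious obstacle; the effort is entirely in bookkeeping the closure-operator laws correctly. The one point that needs care is keeping the oracle input $\set{M}$ and the individual-knowledge base $\clo{a}{s}(\emptyset)$ distinct while unfolding $\preorder{a}^{M}$, and recognising that the redundancy of $a$ as oracle input (Part~1) and the absorption of an already-derivable $M$ (Part~2(a)) are the same closure phenomenon in two guises.
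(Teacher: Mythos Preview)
Your proposal is correct and follows essentially the same approach as the paper: both rely on the closure-operator properties of $\clo{a}{}$ and the base clause $a\in\clo{a}{s}(\emptyset)$, and both invoke Lemma~\ref{lemma:HKKHK} for the backward direction of 2(b). The only cosmetic difference is that you derive Part~1 as the special case $M=a$ of Part~2(a), whereas the paper proves Part~1 directly and then refers back to the proof of Lemma~\ref{lemma:HKKHK} for Part~2(a).
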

\begin{proof}
	For 1, 
		consider that 
			$a\in\clo{a}{s}(\emptyset)$, and thus
				$\clo{a}{s}(\set{a})=\clo{a}{s}(\emptyset)$.
	For 2.a, 
		inspect the proof of Lemma~\ref{lemma:HKKHK}.
	For the forward-direction of 2.b, 
		take $s$ as $s'$; and
			for the backward-direction apply Lemma~\ref{lemma:HKKHK}.
\end{proof}

\begin{proposition}[Self-symmetry]\label{proposition:SelfSymmetry}
	$$\text{If $s\preorder{a}^{a}s'$ then $s'\preorder{a}^{a}s$.}$$
\end{proposition}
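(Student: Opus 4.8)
The plan is to observe that the self-indexed relation $\preorder{a}^{a}$ collapses to plain equality of $a$'s individual-knowledge bases, after which symmetry is immediate from the symmetry of set equality. In other words, I would show that $s\preorder{a}^{a}s'$ is equivalent to the symmetric condition $\clo{a}{s}(\emptyset)=\clo{a}{s'}(\emptyset)$, which makes the proposition a one-line consequence. This also fits the declared intent that $\indist{a}{}{}\defeq\preorder{a}^{a}$ be an equivalence relation, of which self-symmetry is precisely the symmetry clause.

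First I would unfold the definition of the data preorder: by definition, $s\preorder{a}^{a}s'$ holds iff $\clo{a}{s}(\set{a})=\clo{a}{s'}(\emptyset)$. The crucial simplification is that the oracle input $a$ is redundant here, because $a$ always lies in $a$'s own individual-knowledge base. Indeed $a\in\clo{a}{0}(\emptyset)=\set{a}$, so $a\in\clo{a}{s}(\emptyset)$, and therefore adjoining $\set{a}$ leaves the closure unchanged, i.e. $\clo{a}{s}(\set{a})=\clo{a}{s}(\emptyset)$. This is exactly the computation already carried out for self-reflexivity in Proposition~\ref{proposition:RestrictedReflexivity}.1, which I would simply cite rather than redo. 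Substituting it into the unfolded definition yields that $s\preorder{a}^{a}s'$ iff $\clo{a}{s}(\emptyset)=\clo{a}{s'}(\emptyset)$.

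It then remains to run the equivalence in reverse. By symmetry of equality, $\clo{a}{s}(\emptyset)=\clo{a}{s'}(\emptyset)$ also reads $\clo{a}{s'}(\emptyset)=\clo{a}{s}(\emptyset)$, and applying the same redundancy of the oracle input (now with $s'$ in place of $s$) rewrites the left-hand side as $\clo{a}{s'}(\set{a})$, giving $\clo{a}{s'}(\set{a})=\clo{a}{s}(\emptyset)$. This is precisely the definition of $s'\preorder{a}^{a}s$, closing the argument. There is no real obstacle here: the entire content is recognising that the self-index $a$ contributes nothing beyond what $a$'s data base already holds, so that the preorder degenerates to an equality and symmetry comes for free. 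The only point worth stating carefully is the base-case membership $a\in\clo{a}{s}(\emptyset)$ that makes the oracle input inert; but this is already on record in Proposition~\ref{proposition:RestrictedReflexivity} and can be reused verbatim.
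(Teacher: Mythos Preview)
Your proof is correct and follows essentially the same approach as the paper, which simply says ``By expansion of the definition of `$\preorder{a}^{a}$' and the symmetry of equality.'' You have spelled out the intermediate step---that the oracle input $a$ is inert because $a\in\clo{a}{s}(\emptyset)$, so the definition collapses to $\clo{a}{s}(\emptyset)=\clo{a}{s'}(\emptyset)$---which the paper leaves implicit in its one-line proof.
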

\begin{proof}
	By expansion of the definition of `$\preorder{a}^{a}$' and the symmetry of equality.
\end{proof}

\begin{proposition}[Generalised transitivity]\label{proposition:GT} 
	$$\text{If $s\preorder{a}^{M}s'$ and $s'\preorder{a}^{M'}s''$ 
				then $s\preorder{a}^{\pair{M}{M'}}s''$.}$$
\end{proposition}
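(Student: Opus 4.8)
The plan is to unfold the definition of the data preorder and reduce the claim to an equational manipulation of the data-mining closure operator $\clo{a}{}$. Recall that $s\preorder{a}^{M}s'$ holds exactly when $\clo{a}{}(\msgs{a}(s)\cup\set{M})=\clo{a}{}(\msgs{a}(s'))$, since $\clo{a}{s}(\data)=\clo{a}{}(\msgs{a}(s)\cup\data)$ by definition. Writing $D_{s}\defeq\msgs{a}(s)$ for brevity, the two hypotheses become $\clo{a}{}(D_{s}\cup\set{M})=\clo{a}{}(D_{s'})$ and $\clo{a}{}(D_{s'}\cup\set{M'})=\clo{a}{}(D_{s''})$, and the goal is $\clo{a}{}(D_{s}\cup\set{\pair{M}{M'}})=\clo{a}{}(D_{s''})$.

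First I would isolate the two standard facts about $\clo{a}{}$ that do all the work. The first is the absorption identity $\clo{a}{}(\clo{a}{}(X)\cup Y)=\clo{a}{}(X\cup Y)$, valid for any closure operator: the nontrivial inclusion $\clo{a}{}(\clo{a}{}(X)\cup Y)\subseteq\clo{a}{}(X\cup Y)$ uses idempotency (together with monotonicity and extensiveness), while the reverse inclusion is immediate from extensiveness and monotonicity. The second is the pairing identity $\clo{a}{}(D_{s}\cup\set{\pair{M}{M'}})=\clo{a}{}(D_{s}\cup\set{M,M'})$, which holds because the pairing and unpairing clauses in the definition of $\clo{a}{n+1}$ make $\pair{M}{M'}$ and its two components $M,M'$ mutually derivable, so each side is contained in the closure of the other.

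With these two lemmas in hand, I would simply chain equalities. Starting from $\clo{a}{}(D_{s}\cup\set{M}\cup\set{M'})$, apply absorption with $X=D_{s}\cup\set{M}$ and $Y=\set{M'}$ to bring the inner set into closed form, substitute the first hypothesis to replace $\clo{a}{}(D_{s}\cup\set{M})$ by $\clo{a}{}(D_{s'})$, apply absorption once more in the reverse direction, and finally substitute the second hypothesis. This yields $\clo{a}{}(D_{s}\cup\set{M,M'})=\clo{a}{}(D_{s''})$; combined with the pairing identity this is exactly $\clo{a}{}(D_{s}\cup\set{\pair{M}{M'}})=\clo{a}{}(D_{s''})$, i.e. $s\preorder{a}^{\pair{M}{M'}}s''$, as required.

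The only point requiring any care is the absorption identity, since it is the one step that genuinely exploits that $\clo{a}{}$ is a bona fide closure operator (monotone, extensive, idempotent) rather than an arbitrary map on $\powerset{\messages}$; the remark immediately following Definition~\ref{definition:SemanticIngredients} already records that $\clo{a}{}$ is a compact closure operator, so this is available off the shelf. Everything else is bookkeeping with set unions and substitution of equals for equals, so I expect no real obstacle. (Note that this individual-agent transitivity is precisely what underlies the community-level generalised-transitivity closure $^{++}$ used to define $\preorder{\community}^{M}$.)
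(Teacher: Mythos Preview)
Your proposal is correct. The equational chain via the absorption identity $\clo{a}{}(\clo{a}{}(X)\cup Y)=\clo{a}{}(X\cup Y)$ together with the pairing identity is a clean and valid route to the conclusion, and your justification of absorption from the closure-operator axioms (extensiveness, monotonicity, idempotency) is sound.

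Compared with the paper's proof, the underlying mathematics is the same but the packaging differs. The paper argues element-wise: it first extracts $M\in\clo{a}{s'}(\emptyset)$ and $M'\in\clo{a}{s''}(\emptyset)$ from the hypotheses, then uses these memberships to establish intermediate equalities such as $\clo{a}{s}(\set{M})=\clo{a}{s'}(\set{M})$ and $\clo{a}{s'}(\set{M'})=\clo{a}{s''}(\set{M'})$, lifts each of these to $\set{\pair{M}{M'}}$, and finally shows $\pair{M}{M'}\in\clo{a}{s''}(\emptyset)$ so that $\clo{a}{s''}(\set{\pair{M}{M'}})=\clo{a}{s''}(\emptyset)$. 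Your approach abstracts all of this element-tracking into one reusable lemma (absorption) and then gets the result in a single short chain of equalities. What you gain is concision and a proof that transparently generalises to any closure operator with a pairing/unpairing rule; what the paper's more explicit argument buys is that each step is visibly grounded in a concrete membership fact, which some readers may find easier to verify against the inductive definition of $\clo{a}{n}$. In particular, the paper's ``lift from $\set{M}$ to $\set{\pair{M}{M'}}$'' steps are exactly instances of your absorption-plus-pairing lemma, so the two arguments are interderivable.
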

\begin{proof}
	Let $s,s'\in\states$ and suppose that 
			$s\preorder{a}^{M}s'$ and 
			$s'\preorder{a}^{M'}s''$.
	Thus:
		\begin{enumerate}
			\item $\clo{a}{s}(\set{M})=\clo{a}{s'}(\emptyset)$; 
				thus $M\in\clo{a}{s'}(\emptyset)$, thus:
				\begin{enumerate}
					\item $M\in\clo{a}{s'}(\set{M'})$ by closure monotonicity ($\emptyset\subseteq\set{M'}$), 
					\item $\clo{a}{s'}(\emptyset)=\clo{a}{s'}(\set{M})$, 
							thus $\clo{a}{s}(\set{M})=\clo{a}{s'}(\set{M})$, and 
							hence\newline $\clo{a}{s}(\set{\pair{M}{M'}})=\clo{a}{s'}(\set{\pair{M}{M'}})$;
					\end{enumerate}
			\item $\clo{a}{s'}(\set{M'})=\clo{a}{s''}(\emptyset)$;  
			thus $M'\in\clo{a}{s''}(\emptyset)$, 
				thus $\clo{a}{s''}(\emptyset)=\clo{a}{s''}(\set{M'})$, 
				thus $\clo{a}{s'}(\set{M'})=\clo{a}{s''}(\set{M'})$, and
				hence $\clo{a}{s'}(\set{\pair{M}{M'}})=\clo{a}{s''}(\set{\pair{M}{M'}})$.
		\end{enumerate}
	Hence: 
		\begin{itemize}
			\item $M\in\clo{a}{s''}(\emptyset)$ by 1.a and the first assertion in 2 , 
				thus $\pair{M}{M'}\in\clo{a}{s''}(\emptyset)$ by the second assertion in 2 and pairing closure, 
				thus $\clo{a}{s''}(\emptyset)=\clo{a}{s''}(\set{\pair{M}{M'}})$; 
			\item $\clo{a}{s}(\set{\pair{M}{M'}})=\clo{a}{s''}(\set{\pair{M}{M'}})$ by 1.b and 2.
		\end{itemize}
	Hence $\clo{a}{s}(\set{\pair{M}{M'}})=\clo{a}{s''}(\emptyset)$, and
	thus $s\preorder{a}^{\pair{M}{M'}}s''$ by definition.
\end{proof}

\begin{corollary}[Transitivity]\label{corollary:Transitivity}
	$$\text{If $s\preorder{a}^{M}s'$ and $s'\preorder{a}^{M}s''$ 
				then $s\preorder{a}^{M}s''$.}$$

\end{corollary}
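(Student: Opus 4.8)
The plan is to obtain the Corollary directly from Generalised transitivity (Proposition~\ref{proposition:GT}) by diagonalising the proof term. Instantiating that proposition with $M'\defeq M$, the two hypotheses $s\preorder{a}^{M}s'$ and $s'\preorder{a}^{M}s''$ immediately yield $s\preorder{a}^{\pair{M}{M}}s''$, that is $\clo{a}{s}(\set{\pair{M}{M}})=\clo{a}{s''}(\emptyset)$. So the only gap left to close is the identification of the data mined from $\pair{M}{M}$ with the data mined from $M$, \ie the semantic analogue of pairing idempotency on the source side.

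For that, I would first note that $\clo{a}{s}$ is itself a closure operator on $\powerset{\messages}$ (it is $\clo{a}{}$ precomposed with $\data\mapsto\msgs{a}(s)\cup\data$, and thus inherits monotonicity, extensiveness, and idempotency from $\clo{a}{}$). The pairing and unpairing clauses in the definition of $\clo{a}{}$ then supply the two inclusions needed: from $M\in\clo{a}{s}(\set{M})$ the pairing clause gives $\pair{M}{M}\in\clo{a}{s}(\set{M})$, whence $\clo{a}{s}(\set{\pair{M}{M}})\subseteq\clo{a}{s}(\set{M})$; conversely, from $\pair{M}{M}\in\clo{a}{s}(\set{\pair{M}{M}})$ the unpairing clause gives $M\in\clo{a}{s}(\set{\pair{M}{M}})$, whence $\clo{a}{s}(\set{M})\subseteq\clo{a}{s}(\set{\pair{M}{M}})$. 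Hence $\clo{a}{s}(\set{\pair{M}{M}})=\clo{a}{s}(\set{M})$, and chaining this with the displayed equality from Proposition~\ref{proposition:GT} gives $\clo{a}{s}(\set{M})=\clo{a}{s''}(\emptyset)$, \ie $s\preorder{a}^{M}s''$.

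I do not expect any genuine obstacle here; the one point to watch is keeping the left- and right-hand closures straight, since only the left argument carries the proof term $\pair{M}{M}$ while $\clo{a}{s''}(\emptyset)$ is term-independent, so the diagonalisation is confined to the source side and never touches the target.

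A cleaner alternative that avoids $\pair{M}{M}$ altogether would bypass Proposition~\ref{proposition:GT} and use Lemma~\ref{lemma:HKKHK} instead: from the first hypothesis $s\preorder{a}^{M}s'$ that lemma gives $s'\preorder{a}^{M}s'$, \ie $\clo{a}{s'}(\set{M})=\clo{a}{s'}(\emptyset)$; then the chain $\clo{a}{s}(\set{M})=\clo{a}{s'}(\emptyset)=\clo{a}{s'}(\set{M})=\clo{a}{s''}(\emptyset)$, using the first hypothesis, Lemma~\ref{lemma:HKKHK}, and the second hypothesis in turn, delivers $s\preorder{a}^{M}s''$ at once.
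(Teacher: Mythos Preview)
Your primary argument is exactly the paper's: apply Proposition~\ref{proposition:GT} with $M'\defeq M$ and then invoke $\clo{a}{s}(\set{\pair{M}{M}})=\clo{a}{s}(\set{M})$; you simply spell out the idempotency step that the paper states as a fact. Your alternative via Lemma~\ref{lemma:HKKHK} is also correct and slightly more direct, but it is not the route the paper takes.
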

\begin{proof} 
	Directly from Proposition~\ref{proposition:GT}  
		by the fact that $\clo{a}{s}(\set{\pair{M}{M}})=\clo{a}{s}(\set{M})$.
\end{proof}
So as announced in Definition~\ref{definition:SemanticIngredients}, 
	`$\preorder{a}^{M}$' is indeed a (non-reflexive) pre-order, and 
	`$\preorder{a}^{a}$' indeed an equivalence relation 
		(\cf Proposition~\ref{proposition:RestrictedReflexivity}.i and \ref{proposition:SelfSymmetry}).

\begin{definition}[Message ordering and equivalence]\ 
	\begin{itemize}
		\item $M\sqsubseteq_{a}^{s}M'$ :iff 
			if $M\in\clo{a}{s}(\emptyset)$ then $M'\in\clo{a}{s}(\emptyset)$
		\item $M\equiv_{a}^{s}M'$ :iff $M\sqsubseteq_{a}^{s}M'$ and $M'\sqsubseteq_{a}^{s}M$
		\item $M\sqsubseteq_{a}M'$ :iff for all $s\in\states$, $M\sqsubseteq_{a}^{s}M'$
		\item $M\equiv_{a}M'$ :iff for all $s\in\states$, $M\equiv_{a}^{s}M'$
	\end{itemize}
\end{definition}

\begin{fact} 
	${\sqsubseteq_{a}^{s}}\subseteq\messages\times\messages$ is a pre- but not a partial order.
\end{fact}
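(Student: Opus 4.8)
The plan is to read off reflexivity and transitivity directly from the logical shape of the defining implication, and then to refute antisymmetry by exhibiting a single concrete pair of distinct but mutually related messages. Recall that $M\sqsubseteq_{a}^{s}M'$ is, by definition, nothing but the implication ``if $M\in\clo{a}{s}(\emptyset)$ then $M'\in\clo{a}{s}(\emptyset)$'', i.e. the implicational pre-ordering induced by membership in the \emph{single} set $\clo{a}{s}(\emptyset)$.

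For reflexivity, $M\sqsubseteq_{a}^{s}M$ unfolds to ``if $M\in\clo{a}{s}(\emptyset)$ then $M\in\clo{a}{s}(\emptyset)$'', which holds vacuously for every $M$. For transitivity, I would suppose $M\sqsubseteq_{a}^{s}M'$ and $M'\sqsubseteq_{a}^{s}M''$ and assume $M\in\clo{a}{s}(\emptyset)$; the first hypothesis yields $M'\in\clo{a}{s}(\emptyset)$, and the second then yields $M''\in\clo{a}{s}(\emptyset)$, establishing $M\sqsubseteq_{a}^{s}M''$. Both properties are immediate precisely because the ordering depends on $\clo{a}{s}(\emptyset)$ only as a fixed subset of $\messages$; no structural property of the data-mining operator is needed.

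To refute antisymmetry I would produce two syntactically distinct messages that agree on membership in $\clo{a}{s}(\emptyset)$, the cleanest choice being $a$ and $\pair{a}{a}$. Both always lie in $\clo{a}{s}(\emptyset)$, independently of $s$: the base clause $\clo{a}{0}(\data)=\set{a}\cup\data$ puts $a\in\clo{a}{s}(\emptyset)$, and one application of the pairing clause then puts $\pair{a}{a}\in\clo{a}{s}(\emptyset)$. Hence both implications defining $a\sqsubseteq_{a}^{s}\pair{a}{a}$ and $\pair{a}{a}\sqsubseteq_{a}^{s}a$ have true consequents, so $a\equiv_{a}^{s}\pair{a}{a}$, yet $a\neq\pair{a}{a}$ as terms of $\messages$. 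This witnesses the failure of antisymmetry for \emph{every} $a$ and $s$, so $\sqsubseteq_{a}^{s}$ is never a partial order.

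The only point requiring a moment's care — and the closest thing here to an obstacle — is to choose a counterexample that works uniformly in $s$, so that the negative conclusion does not accidentally hinge on a particular state's data base $\msgs{a}(s)$; for this it suffices to use messages whose derivability is forced by the unconditional clauses of $\clo{a}{}$. One could equally use $a$ and $\sign{a}{a}$ via the personal signature-synthesis clause, or any $M$ together with $\pair{M}{M}$, which are $\sqsubseteq_{a}^{s}$-equivalent by pairing idempotency.
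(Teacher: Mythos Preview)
Your argument is correct. The paper states this result as a bare Fact with no accompanying proof, so there is nothing to compare against; your write-up supplies exactly the routine verification the paper omits, and your choice of $a$ versus $\pair{a}{a}$ (or the alternatives you mention) is a clean, state-independent witness to the failure of antisymmetry.
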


\begin{proposition}[Conditional stability]\label{proposition:ConditionalStability}
		$$\text{If $M\equiv_{a}M'$ then ${\preorder{a}^{M}}={\preorder{a}^{M'}}$.}$$
\end{proposition}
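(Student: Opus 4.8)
The plan is to unfold the definition of the data preorder and reduce the claimed equality of relations to a single inclusion between data-mining closures, which I then transfer from $M$ to $M'$ by exploiting the \emph{global} (``for all $s$'') nature of $M\equiv_{a}M'$. Since $M\equiv_{a}M'$ is symmetric in $M$ and $M'$, it suffices to prove the implication ``if $s\preorder{a}^{M}s'$ then $s\preorder{a}^{M'}s'$'' for arbitrary $s,s'\in\states$; interchanging the roles of $M$ and $M'$ then yields the converse implication and hence ${\preorder{a}^{M}}={\preorder{a}^{M'}}$. By definition the hypothesis reads $\clo{a}{s}(\set{M})=\clo{a}{s'}(\emptyset)$ and the goal reads $\clo{a}{s}(\set{M'})=\clo{a}{s'}(\emptyset)$.

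First I would settle the inclusion $\clo{a}{s}(\set{M'})\subseteq\clo{a}{s'}(\emptyset)$. From the hypothesis we have $M\in\clo{a}{s}(\set{M})=\clo{a}{s'}(\emptyset)$, so applying the equivalence $M\equiv_{a}M'$ at the genuine state $s'$ gives $M'\in\clo{a}{s'}(\emptyset)$. Combined with $\msgs{a}(s)\subseteq\clo{a}{s}(\emptyset)\subseteq\clo{a}{s}(\set{M})=\clo{a}{s'}(\emptyset)$, this shows $\msgs{a}(s)\cup\set{M'}\subseteq\clo{a}{s'}(\emptyset)$; since $\clo{a}{}$ is a monotone, idempotent closure operator and $\clo{a}{s'}(\emptyset)$ is $\clo{a}{}$-closed, taking closures yields $\clo{a}{s}(\set{M'})=\clo{a}{}(\msgs{a}(s)\cup\set{M'})\subseteq\clo{a}{s'}(\emptyset)$, as wanted.

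The reverse inclusion $\clo{a}{s'}(\emptyset)\subseteq\clo{a}{s}(\set{M'})$ is the crux and the main obstacle. Using the hypothesis it rewrites as $\clo{a}{s}(\set{M})\subseteq\clo{a}{s}(\set{M'})$, which, because $\msgs{a}(s)$ is common to both sides, reduces to the single membership $M\in\clo{a}{s}(\set{M'})$, \ie to the interderivability of $M$ and $M'$ relative to $a$'s base $\msgs{a}(s)$. This cannot be obtained from the closure laws alone: the data-mining operator $\clo{a}{}$ has no (matroid-style) exchange property (for instance $B_{1}$ is minable from $\pair{B_{1}}{B_{2}}$ while $\pair{B_{1}}{B_{2}}$ is not minable from $B_{1}$), so purely local reasoning inside the fixed base $\msgs{a}(s)$ does not suffice. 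The plan is therefore to route this membership through the global quantifier in $M\equiv_{a}M'$ by passing to a state that realises the $M'$-augmented base: taking a state $t$ with $s\preorder{a}^{M'}t$, \ie $\clo{a}{t}(\emptyset)=\clo{a}{s}(\set{M'})$, we have $M'\in\clo{a}{t}(\emptyset)$, whence $M\in\clo{a}{t}(\emptyset)=\clo{a}{s}(\set{M'})$ by $M\equiv_{a}M'$ applied at $t$, which is exactly the missing membership.

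What the argument really hinges on — and what I expect to be the delicate point to justify — is the availability of such a witnessing state $t$, \ie that augmented individual-knowledge bases $\clo{a}{s}(\set{M'})$ are realised by states (seriality of $\preorder{a}^{M'}$); this is ensured in the models under consideration, and it is precisely here that the \emph{global} strength of $M\equiv_{a}M'$, as opposed to a merely local equivalence at $s$ and $s'$, is indispensable. Granting it, the two inclusions give $\clo{a}{s}(\set{M'})=\clo{a}{s'}(\emptyset)$, hence $s\preorder{a}^{M'}s'$; the symmetric argument gives the converse implication, and together they establish ${\preorder{a}^{M}}={\preorder{a}^{M'}}$. In fact the same two observations, applied to a realising state of $\clo{a}{s}(\set{M})$ and one of $\clo{a}{s}(\set{M'})$, prove the sharper state-wise identity $\clo{a}{s}(\set{M})=\clo{a}{s}(\set{M'})$ for every $s$, from which the proposition is then immediate.
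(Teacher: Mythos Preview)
Your strategy mirrors the paper's: reduce to the implication $s\preorder{a}^{M}s'\Rightarrow s\preorder{a}^{M'}s'$ and appeal to symmetry, and you correctly isolate the crux as the single membership $M\in\clo{a}{s}(\set{M'})$. You are also right that this membership does not follow from the closure laws alone.

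The gap in your argument is the seriality assumption. Nothing in the paper's semantic setup guarantees that for every $s$ there is a state $t$ with $\clo{a}{t}(\emptyset)=\clo{a}{s}(\set{M'})$; your remark that ``this is ensured in the models under consideration'' is asserted, not derived. Without such a realising state the hypothesis $M\equiv_{a}M'$, which quantifies only over genuine states, tells you nothing about the augmented base $\msgs{a}(s)\cup\set{M'}$, and the reverse inclusion remains open.

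That said, the paper's own proof does not escape this difficulty. It particularises $M\equiv_{a}M'$ at $s$ to obtain ``$M\in\clo{a}{s}(\emptyset)$ iff $M'\in\clo{a}{s}(\emptyset)$'' and then writes ``thus $M\in\clo{a}{s}(\set{M'})$ iff $M'\in\clo{a}{s}(\set{M'})$''---but this silently transfers the biconditional from the base $\msgs{a}(s)$ to the augmented base $\msgs{a}(s)\cup\set{M'}$, which is exactly the step you flagged (in the case $M'\notin\clo{a}{s}(\emptyset)$ the first biconditional gives no information about the second). So you and the paper share the same lacuna; you have simply made it explicit. Beyond this, the paper's argument differs only cosmetically, routing the chain of equalities through the pair $\pair{M}{M'}$ rather than arguing the two inclusions directly.
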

\begin{proof}
	Suppose that 
		for all $s''\in\states$, $M\in\clo{a}{s''}(\emptyset)$ if and only if  $M'\in\clo{a}{s''}(\emptyset)$, and
		let $s,s'\in\states$.
	For the $\subseteq$-part, suppose that $s\preorder{a}^{M}s'$, \ie 
		$\clo{a}{s}(\set{M})=\clo{a}{s'}(\emptyset)$, and thus 
		$M\in\clo{a}{s'}(\emptyset)$.
	Hence:
		\begin{enumerate}
			\item $M'\in\clo{a}{s'}(\emptyset)$ by particularisation of the first hypothesis, and
				$\pair{M}{M'}\in\clo{a}{s'}(\emptyset)$ by pairing closure; and 
				thus $\clo{a}{s'}(\set{\pair{M}{M'}})=\clo{a}{s'}(\emptyset);$
			\item $M\in\clo{a}{s}(\emptyset)$ if and only if $M'\in\clo{a}{s}(\emptyset)$ by particularisation of the first hypothesis, thus $M\in\clo{a}{s}(\set{M'})$ if and only if $M'\in\clo{a}{s}(\set{M'})$, thus $M\in\clo{a}{s}(\set{M'})$, and
				thus $\clo{a}{s}(\set{M'})=\clo{a}{s}(\set{\pair{M}{M'}});$
			\item $\clo{a}{s'}(\set{M})=\clo{a}{s'}(\emptyset)$, thus
					$\clo{a}{s'}(\set{M})=\clo{a}{s}(\set{M})$, and thus
					$\clo{a}{s'}(\set{\pair{M}{M'}})=\clo{a}{s}(\pair{M}{M'})$.
		\end{enumerate}
	Hence $\clo{a}{s}(\set{M'})=\clo{a}{s'}(\emptyset)$ by 1, 2, and 3.
	And symmetrically for the $\supseteq$-part.
\end{proof}

\begin{proposition}[Communal lifting]\label{proposition:CommuncalLifting}\ 
	\begin{enumerate}
		\item If $\community\subseteq\community'$ 
				then ${\preorder{\community}^{M}}\subseteq{\preorder{\community'}^{M}}$\quad(communal monotonicity).
		\item If $M\in\clo{a}{s}(\emptyset)$ then $s\preorder{\community\cup\set{a}}^{M}s$\quad(conditional reflexivity).
		\item If $M\equiv_{a}M'$ then ${\preorder{\community\cup\set{a}}^{M}}={\preorder{\community\cup\set{a}}^{M'}}$\quad(conditional stability).
	\end{enumerate}
\end{proposition}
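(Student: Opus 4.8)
The plan is to treat the three parts separately, exploiting throughout that each concerns the generalised-transitive closure $(\cdot)^{++}$, which I regard as a closure operator on families of relations indexed by message terms; the only properties I will need of it are that it is \emph{extensive} and \emph{monotone}, and that it satisfies its defining law $\preorder{\community}^{M}\circ\preorder{\community}^{M'}\subseteq\preorder{\community}^{\pair{M}{M'}}$. Parts~1 and~2 are short; Part~3 is the crux.

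For Part~1, note that $\community\subseteq\community'$ gives $\bigcup_{b\in\community}\preorder{b}^{M}\subseteq\bigcup_{b\in\community'}\preorder{b}^{M}$ for every $M$, i.e.\ a componentwise inclusion of the two base families. Since $(\cdot)^{++}$ is monotone, this inclusion is preserved by the closure, and reading off the $M$-component yields $\preorder{\community}^{M}\subseteq\preorder{\community'}^{M}$. For Part~2, if $M\in\clo{a}{s}(\emptyset)$ then $s\preorder{a}^{M}s$ by Proposition~\ref{proposition:RestrictedReflexivity}.2.a; since $a\in\community\cup\set{a}$, this pair already lies in the base union $\bigcup_{b\in\community\cup\set{a}}\preorder{b}^{M}$, hence in $\preorder{\community\cup\set{a}}^{M}$ because the closure is extensive.

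For Part~3, I would first use the symmetry of $\equiv_{a}$ to reduce to the single inclusion $\preorder{\community\cup\set{a}}^{M}\subseteq\preorder{\community\cup\set{a}}^{M'}$. The one piece of information the hypothesis hands me directly is, via Proposition~\ref{proposition:ConditionalStability}, the equality $\preorder{a}^{M}=\preorder{a}^{M'}$ for the distinguished agent $a$ alone. I would then argue by induction on the number of applications of generalised transitivity used to generate a pair at superscript $M$, showing that each such pair is already present at superscript $M'$: a step taken by $a$ is absorbed at once by $\preorder{a}^{M}=\preorder{a}^{M'}$, whereas a step $s\preorder{b}^{M}s'$ taken by some $b\in\community$ with $b\neq a$ must be \emph{re-routed}. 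Here the plan is to exploit that $a$ is always a member of $\community\cup\set{a}$: using conditional reflexivity (Part~2) to insert an $a$-self-loop that effects the $\equiv_{a}$-conversion between $M$ and $M'$, and then using the defining law $\preorder{\community\cup\set{a}}^{M'}\circ\preorder{\community\cup\set{a}}^{N}\subseteq\preorder{\community\cup\set{a}}^{\pair{M'}{N}}$ to recombine superscripts, collapsing $\pair{M'}{N}$ back to $M'$ for a suitably neutral $N$ by the idempotency and neutral-element pair-laws at the level of data closures (as used in the proof of Corollary~\ref{corollary:Transitivity}).

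I expect the bridging of the non-$a$ steps to be the main obstacle. The hypothesis $M\equiv_{a}M'$ constrains only agent $a$'s data closures, whereas $\preorder{\community\cup\set{a}}^{M}$ mixes the base relations of \emph{every} member of $\community\cup\set{a}$, and for $b\neq a$ one need not have $M\equiv_{b}M'$, so $\preorder{b}^{M}$ and $\preorder{b}^{M'}$ genuinely differ. The whole weight of the argument therefore rests on showing that the always-present agent $a$, through generalised transitivity, suffices to absorb these divergent $b$-steps into the $M'$-closure; making this re-routing precise --- in particular securing the intermediate states needed to realise the required compositions --- is the delicate point, and is exactly where I would concentrate the effort.
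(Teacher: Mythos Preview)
Your treatment of Parts~1 and~2 is correct and coincides with the paper's: Part~1 ``follows directly from definitions'' (monotonicity of the union plus monotonicity of the closure), and Part~2 follows from Part~1 together with Proposition~\ref{proposition:RestrictedReflexivity}.2.a, exactly as you say.

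For Part~3 the paper gives a one-line argument --- ``from Proposition~\ref{proposition:ConditionalStability} and the definition of $\preorder{\community\cup\set{a}}^{M}$ and $\preorder{\community\cup\set{a}}^{M'}$'' --- whereas you have embarked on a much more elaborate inductive re-routing argument. You have correctly located the genuine subtlety that the paper's line elides: $M\equiv_{a}M'$ yields only $\preorder{a}^{M}=\preorder{a}^{M'}$ via Proposition~\ref{proposition:ConditionalStability}, while for $b\in\community$ with $b\neq a$ the base relations $\preorder{b}^{M}$ and $\preorder{b}^{M'}$ can differ. (A concrete instance: $M\defeq a$, $M'\defeq\sign{a}{a}$; both always lie in $\clo{a}{s}(\emptyset)$, so $M\equiv_{a}M'$, yet $b$ cannot synthesise $\sign{a}{a}$ from $a$, so $M\not\equiv_{b}M'$.)

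However, your proposed fix does not close this gap. The $a$-self-loop you want to insert via Part~2 requires $M\in\clo{a}{s}(\emptyset)$ (equivalently $M'\in\clo{a}{s}(\emptyset)$) at the very state where a non-$a$ step begins, and nothing in the hypotheses provides that. Even granting the loop, composing it with a $b$-step at superscript $M$ produces, by generalised transitivity, a step at superscript $\pair{M'}{M}$ (or $\pair{M}{M'}$); collapsing that back to $M'$ would need $\pair{M'}{M}\equiv_{c}M'$ for \emph{every} $c\in\community\cup\set{a}$, not just for $c=a$ --- which is precisely the agent-uniformity you do not have. So the re-routing is circular: each attempted reduction reproduces the original obstacle. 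In short, the difficulty you flag is real, and your sketch does not overcome it; the paper's terse proof does not visibly overcome it either, and seems to presuppose either a stronger closure property of $(\cdot)^{++}$ than the stated one, or that the relevant equivalences are in fact agent-independent.
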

\begin{proof}
	1 follows directly from definitions, 
	2 from 1 and Proposition~\ref{proposition:RestrictedReflexivity}.ii.a, and
	3 from Proposition~\ref{proposition:ConditionalStability} and 
		the definition of `$\preorder{\community\cup\set{a}}^{M}$' and `$\preorder{\community\cup\set{a}}^{M'}$'.
\end{proof}

\begin{proposition}[Signature property]\label{proposition:SignatureProperty}
	$$\text{If $s\preorder{\community}^{\sign{M}{a}}s'$ then $M\in\clo{a}{s'}(\emptyset)$.}$$
\end{proposition}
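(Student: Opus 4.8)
The plan is to peel off the community structure first and then invoke the unforgeability of signatures.

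First I would reduce the communal data preorder to a single reviewer. By definition $\preorder{\community}^{\sign{M}{a}}$ is the generalised-transitive closure $(\bigcup_{b\in\community}\preorder{b}^{\sign{M}{a}})^{++}$, i.e.\ the least family of relations containing the bases $\bigcup_{b\in\community}\preorder{b}^{N}$ (for every superscript $N$) and closed under generalised transitivity. A straightforward induction on the closure stages shows that the superscript-$N$ relation can grow beyond its base only through a composition $\preorder{\community}^{M_1}\circ\preorder{\community}^{M_2}$, whose result always carries the superscript $\pair{M_1}{M_2}$. Since message terms have unique parse trees (Definition~\ref{definition:LiiPLanguage}), $\sign{M}{a}$ is never of the form $\pair{M_1}{M_2}$, so no composition can feed into superscript $\sign{M}{a}$; hence $\preorder{\community}^{\sign{M}{a}}=\bigcup_{b\in\community}\preorder{b}^{\sign{M}{a}}$. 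From the hypothesis $s\preorder{\community}^{\sign{M}{a}}s'$ I therefore obtain some $b\in\community$ with $s\preorder{b}^{\sign{M}{a}}s'$.

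Next I would extract the relevant membership. Unfolding $s\preorder{b}^{\sign{M}{a}}s'$ gives $\clo{b}{s}(\set{\sign{M}{a}})=\clo{b}{s'}(\emptyset)$. Since $\sign{M}{a}$ seeds the left-hand closure, it lies in it, and hence in $\clo{b}{s'}(\emptyset)=\clo{b}{}(\msgs{b}(s'))$. So the reviewer $b$ holds $a$'s signature on $M$ at $s'$.

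It remains to move from ``$b$ holds $\sign{M}{a}$ at $s'$'' to the claim ``$a$ knows $M$ at $s'$'', i.e.\ $M\in\clo{a}{s'}(\emptyset)$, and this cross-agent step is the main obstacle. If $a=b$, it is purely structural: universal signature analysis turns $\sign{M}{a}$ into $\pair{M}{a}$ and unpairing then yields $M$, so $M\in\clo{a}{s'}(\emptyset)$ follows from the data-mining closure operations alone. But for $a\neq b$ this same route only delivers $M\in\clo{b}{s'}(\emptyset)$, which concerns the wrong agent; here the modelling stipulation recorded in Definition~\ref{definition:SemanticIngredients} that \emph{only $a$ can generate $a$'s signature} is essential and does the real work. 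The presence of $\sign{M}{a}$ in $b$'s data base at $s'$ witnesses that $a$ itself produced it by personal signature synthesis from $M$, which forces $M\in\clo{a}{s'}(\emptyset)$. The delicate point I expect to spend most care on is making this informal unforgeability precise as an \emph{a priori} constraint on the semantic ingredients and applying it cleanly: either directly as the stipulation that $\sign{M}{a}\in\clo{b}{s}(\emptyset)$ implies $M\in\clo{a}{s}(\emptyset)$, or as a constraint on the raw-data extractor $\msgs{b}$ supplemented by a subterm lemma reducing closure-membership of a foreign signature to raw-data membership of a superterm of it. Everything else is routine bookkeeping with the closure operators.
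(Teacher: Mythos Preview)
Your proposal is correct and follows essentially the same route as the paper: reduce the communal preorder to a single reviewer $b$, extract $\sign{M}{a}\in\clo{b}{s'}(\emptyset)$, and then appeal to the unforgeability of signatures as a modelling stipulation to cross from $b$ to $a$. The paper is terser---it simply asserts ``thus there is $b\in\community$ such that $s\preorder{b}^{\sign{M}{a}}s'$'' without argument and cites biconditional reflexivity (Proposition~\ref{proposition:RestrictedReflexivity}.ii) for the membership step---whereas you justify the reduction via the parse-tree observation that $\sign{M}{a}$ cannot arise as a pair superscript in the generalised-transitive closure, and unfold the preorder definition directly; both treat the final unforgeability step informally, and you are right that this is where the real content lies.
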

\begin{proof}
	Let $s,s'\in\states$ and 
	suppose that $s\preorder{\community}^{\sign{M}{a}}s'$.
	Thus there is $b\in\community$ such that $s\preorder{b}^{\sign{M}{a}}s'$.
	Hence $\sign{M}{a}\in\clo{b}{s'}(\emptyset)$ by biconditional reflexivity (\cf Proposition~\ref{proposition:RestrictedReflexivity}.ii.a).
	But then also $M\in\clo{a}{s'}(\emptyset)$ by 
		the unforgeability of signatures 
			(\cf the closure conditions of 
					personal/universal signature synthesis/analysis).
	That is, 
		nobody else than $a$ can have generated $\sign{M}{a}$, and 
			thus $a$ also knows $M$.
	(Otherwise suppose that somebody else has, and derive a contradiction.)
\end{proof}

%

\begin{corollary}[Concrete accessibility]\label{corollary:ConcreteAccessibility}\ 
\begin{enumerate}
	\item If $\community\subseteq\community'$ 
			then ${\pAccess{M}{a}{\community}}\subseteq{\pAccess{M}{a}{\community'}}$\quad(communal monotonicity).
	\item If $M\equiv_{a}M'$ then ${\pAccess{M}{a}{\community}}={\pAccess{M'}{a}{\community}}$\quad(conditional stability).
	\item If $M\in\clo{a}{s}(\emptyset)$
			then $s\pAccess{M}{a}{\community}s$\quad(conditional reflexivity).
	\item If $s\pAccess{\sign{M}{b}}{a}{\community}s'$ then $M\in\clo{b}{s'}(\emptyset)$\quad(signature property).
	\item For all $b\in\community\cup\set{a}$,  
		$({\pAccess{\sign{M}{a}}{b}{\community\cup\set{a}}}\circ{\pAccess{M}{a}{\community}})\subseteq{\pAccess{\pair{M}{b}}{a}{\community}}$\quad(communal transitivity).
\end{enumerate}
\end{corollary}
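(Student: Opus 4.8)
The plan is to derive all five items by transporting the corresponding properties of the data‑preorder family $\preorder{\community}^{M}$ (Propositions~\ref{proposition:CommuncalLifting}, \ref{proposition:ConditionalStability}, \ref{proposition:SignatureProperty}, \ref{proposition:GT}) through the three‑clause definition of $\pAccess{M}{a}{\community}$, i.e.\ through the witness $\tilde{s}$ subject to $s\preorder{\community\cup\set{a}}^{M}\tilde{s}$, $M\in\clo{a}{\tilde{s}}(\emptyset)$, and $\indist{a}{\tilde{s}}{s'}$. For item~1, given $s\pAccess{M}{a}{\community}s'$ with witness $\tilde{s}$, I would keep $\tilde{s}$ and upgrade only the first clause by communal monotonicity (Proposition~\ref{proposition:CommuncalLifting}.1), since $\community\cup\set{a}\subseteq\community'\cup\set{a}$; the other two clauses are untouched. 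For item~2 I would use that $M\equiv_{a}M'$ yields both $\preorder{\community\cup\set{a}}^{M}=\preorder{\community\cup\set{a}}^{M'}$ (Proposition~\ref{proposition:CommuncalLifting}.3) and $M\in\clo{a}{\tilde{s}}(\emptyset)\Lequiv M'\in\clo{a}{\tilde{s}}(\emptyset)$, so the same $\tilde{s}$ witnesses both relations, the two inclusions being symmetric. For item~3 I would take $\tilde{s}\defeq s$: conditional reflexivity of $\preorder{}$ (Proposition~\ref{proposition:CommuncalLifting}.2) supplies $s\preorder{\community\cup\set{a}}^{M}s$ from the hypothesis $M\in\clo{a}{s}(\emptyset)$, that hypothesis is the middle clause, and $\indist{a}{s}{s}$ is self‑reflexivity (Proposition~\ref{proposition:RestrictedReflexivity}.i).

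For item~4, given $s\pAccess{\sign{M}{b}}{a}{\community}s'$ with witness $\tilde{s}$, I would first slide the signature onto $s'$: from $\indist{a}{\tilde{s}}{s'}$ we have $\clo{a}{\tilde{s}}(\emptyset)=\clo{a}{s'}(\emptyset)$, so $\sign{M}{b}\in\clo{a}{s'}(\emptyset)$; then $s'\preorder{a}^{\sign{M}{b}}s'$ by biconditional reflexivity (Proposition~\ref{proposition:RestrictedReflexivity}.ii.a), hence $s'\preorder{\community\cup\set{a}}^{\sign{M}{b}}s'$ by Proposition~\ref{proposition:CommuncalLifting}.1, and Proposition~\ref{proposition:SignatureProperty} (signer $b$) delivers $M\in\clo{b}{s'}(\emptyset)$.

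The substantive item is~5, the frame condition matching nominal peer review. Given $s\pAccess{\sign{M}{a}}{b}{\community\cup\set{a}}s'$ with witness $\tilde{s}_{1}$ and $s'\pAccess{M}{a}{\community}s''$ with witness $\tilde{s}_{2}$, I would chain the three preorder links $s\preorder{\community\cup\set{a}}^{\sign{M}{a}}\tilde{s}_{1}$, then $\tilde{s}_{1}\preorder{\community\cup\set{a}}^{b}s'$ (from $\indist{b}{\tilde{s}_{1}}{s'}$ and $b\in\community\cup\set{a}$ via Proposition~\ref{proposition:CommuncalLifting}.1), then $s'\preorder{\community\cup\set{a}}^{M}\tilde{s}_{2}$, using the generalised transitivity built into $\preorder{\community\cup\set{a}}^{}$ (Definition~\ref{definition:SemanticIngredients}) to obtain $s\preorder{\community\cup\set{a}}^{\pair{\pair{\sign{M}{a}}{b}}{M}}\tilde{s}_{2}$. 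Since $a$ can both synthesise and analyse her own signature, the closure conditions give $\pair{\pair{\sign{M}{a}}{b}}{M}\equiv_{a}\pair{M}{b}$, so conditional stability (Proposition~\ref{proposition:CommuncalLifting}.3) rewrites this to $s\preorder{\community\cup\set{a}}^{\pair{M}{b}}\tilde{s}_{2}$. As $\indist{a}{\tilde{s}_{2}}{s''}$ is already available, $\tilde{s}_{2}$ is the candidate witness for $s\pAccess{\pair{M}{b}}{a}{\community}s''$, and only the middle clause $\pair{M}{b}\in\clo{a}{\tilde{s}_{2}}(\emptyset)$ remains.

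That last verification is where I expect the difficulty, and it is the semantic \emph{raison d'\^{e}tre} of recording $b$ in the proof term $\pair{M}{b}$. Its first half, $M\in\clo{a}{\tilde{s}_{2}}(\emptyset)$, is exactly the defining clause of the second relation; its second half, $b\in\clo{a}{\tilde{s}_{2}}(\emptyset)$ --- the prover $a$ knowing the reviewer's name --- is precisely what the pairing with $b$ is there to register. Two observations frame the obstacle: the label cannot be simplified past $\pair{M}{b}$ (it does \emph{not} collapse to $M$, because $\pair{\pair{\sign{M}{a}}{b}}{M}\not\equiv_{a}M$ once $b$ is foreign to $a$), so the conclusion must carry $\pair{M}{b}$ and not $M$; and since $\indist{a}{\tilde{s}_{2}}{s''}$ forces every admissible witness to share $a$'s data base with $s''$, the item genuinely reduces to $b\in\clo{a}{s''}(\emptyset)$. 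I would discharge this from the way the peer‑review exchange places the interlocutor's name in $a$'s data base; making that argument precise against the definitions is the crux of the whole corollary, whereas items~1--4 are essentially routine lifts.
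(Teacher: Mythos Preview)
Your plan coincides with the paper's proof almost line for line. Items~1--4 are handled in the paper exactly as you sketch (``by inspection of definitions and Proposition~\ref{proposition:CommuncalLifting} and~\ref{proposition:SignatureProperty}''), and for item~5 the paper runs the very same chain: it unpacks the two accessibilities into witnesses $\tilde{s}$ and $\tilde{s}'$, uses $b\in\community\cup\set{a}$ to get $\community\cup\set{a}\cup\set{b}=\community\cup\set{a}$, turns $\indist{b}{\tilde{s}}{s'}$ into $\tilde{s}\preorder{\community\cup\set{a}}^{b}s'$, applies generalised transitivity twice to reach $s\preorder{\community\cup\set{a}}^{\pair{\pair{\sign{M}{a}}{b}}{M}}\tilde{s}'$, rewrites the label via $\pair{\pair{\sign{M}{a}}{b}}{M}\equiv_{a}\pair{M}{b}$ and conditional stability, and concludes $s\pAccess{\pair{M}{b}}{a}{\community}s''$ with witness~$\tilde{s}'$.

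The step you single out as the crux---verifying the middle clause $\pair{M}{b}\in\clo{a}{\tilde{s}'}(\emptyset)$, i.e.\ $b\in\clo{a}{\tilde{s}'}(\emptyset)$---is precisely where the paper's argument is also silent: it closes the proof ``by again the third supposition,'' but that supposition provides only $M\in\clo{a}{\tilde{s}'}(\emptyset)$ and $\indist{a}{\tilde{s}'}{s''}$, not $b\in\clo{a}{\tilde{s}'}(\emptyset)$. Since the closure base $\clo{a}{0}(\data)=\set{a}\cup\data$ adds only $a$'s own name, nothing in the displayed chain forces $b$ into $a$'s closure at~$\tilde{s}'$. So your diagnosis is accurate: this middle-clause verification is left open in the paper just as in your proposal, and your informal appeal to ``the peer-review exchange placing the interlocutor's name in $a$'s data base'' has no counterpart in the paper's proof---the paper simply does not discharge it.
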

\begin{proof}
	1--4 follow by inspection of 
		definitions and  
		Proposition~\ref{proposition:CommuncalLifting} and 
		\ref{proposition:SignatureProperty}.
	For 5, 
		suppose that 
			$b\in\community\cup\set{a}$ and
		let $s,s',s''\in\states$.
	Further 
		suppose that 
			$s\pAccess{\sign{M}{a}}{b}{\community\cup\set{a}}s'$ and 
			$s'\pAccess{M}{a}{\community}s''$.
	That is,
		(there is $\tilde{s}\in\states$ such that  
			$s\preorder{\community\cup\set{a}\cup\set{b}}^{\sign{M}{a}}\tilde{s}$ and
			$\sign{M}{a}\in\clo{b}{\tilde{s}}(\emptyset)$ and
			$\indist{b}{\tilde{s}}{s'}$) and
		(there is $\tilde{s}'\in\states$ such that  
			$s'\preorder{\community\cup\set{a}}^{M}\tilde{s}'$ and
			$M\in\clo{a}{\tilde{s}'}(\emptyset)$ and
			$\indist{a}{\tilde{s}'}{s''}$).
	Hence, 
		$s\preorder{\community\cup\set{a}}^{\sign{M}{a}}\tilde{s}$ by 
			the first supposition and communal monotonicity ($\community\cup\set{a}\cup\set{b}=\community\cup\set{a}$), and also 
		$\tilde{s}\preorder{b}^{b}s'$ by definition (\cf second supposition).
	Hence consecutively, 
		$\tilde{s}\preorder{\community\cup\set{a}}^{b}s'$ by the first supposition and communal monotonicity 
			($\set{b}\subseteq\community\cup\set{a}$), 
		$s\preorder{\community\cup\set{a}}^{\pair{\sign{M}{a}}{b}}s'$ by 
			generalised transitivity,
		$s\preorder{\community\cup\set{a}}^{\pair{\pair{\sign{M}{a}}{b}}{M}}\tilde{s}'$ by the third supposition and again generalised transitivity, 
		$s\preorder{\community\cup\set{a}}^{\pair{M}{b}}\tilde{s}'$ by conditional stability
			($\pair{\pair{\sign{M}{a}}{b}}{M}\equiv_{a}\pair{M}{b}$), and thus finally 
		$s\pAccess{\pair{M}{b}}{a}{\community}s''$ by again the third supposition.
\end{proof}

\begin{definition}[Kripke-model]\label{definition:KripkeModel}
We define the \emph{satisfaction relation} `$\models$' for 
		LiiP in Table~\ref{table:SatisfactionRelation}, 
	\begin{table}[t]
	\centering
	\caption{Satisfaction relation}
	\smallskip
	\fbox{$\begin{array}{@{}rcl@{}}
		(\aModalFrame, \mathcal{V}), s\models P &\text{:iff}& s\in\mathcal{V}(P)\\[\jot]
		(\aModalFrame, \mathcal{V}), s\models\neg\phi &\text{:iff}& \text{not $(\aModalFrame, \mathcal{V}), s\models\phi$}\\[\jot]
		(\aModalFrame, \mathcal{V}), s\models\phi\land\phi' &\text{:iff}& \text{$(\aModalFrame, \mathcal{V}), s\models\phi$ and $(\aModalFrame, \mathcal{V}), s\models\phi'$}\\[\jot]
		(\aModalFrame, \mathcal{V}), s\models\proves{M}{\phi}{a}{\community} &\text{:iff}& 
			\begin{array}[t]{@{}l@{}}
				\text{for all $s'\in\states$, }
			 	\text{if $s\access{M}{a}{\community}s'$ then $(\aModalFrame, \mathcal{V}), s'\models\phi$}
			\end{array}
	\end{array}$}
	\label{table:SatisfactionRelation}
	\end{table}
where 
	\begin{itemize}
		\item $\mathcal{V}:\mathcal{P}\rightarrow\powerset{\states}$ designates a usual \emph{valuation function,} yet
			partially predefined such that for all $a\in\agents$ and $M\in\messages$,
				$$\mathcal{V}(\knows{a}{M})\defeq\setst{s\in\states}{M\in\clo{a}{s}(\emptyset)}$$
				
				(If agents are Turing-machines 
					then $a$ knowing $M$ can be understood as $a$ being able to parse $M$ on its tape.)
		\item $\aModalFrame\defeq(\states,\set{\access{M}{a}{\community}}_{M\in\messages,a\in\agents,\community\subseteq\agents})$
			designates a (modal) \emph{frame} for LiiP with 
		an \emph{\textbf{abstractly constrained} accessibility relation}---short, \emph{\textbf{abstract}  accessibility}---${\access{M}{a}{\community}}\subseteq\states\times\states$ for 
			the proof modality such that---the \emph{semantic interface:}\label{page:AbstractProofAccessibility} 
			\begin{itemize}
	\item if $\community\subseteq\community'$ 
			then ${\access{M}{a}{\community}}\subseteq{\access{M}{a}{\community'}}$
	\item \colorbox[gray]{0.75}{if $M\equiv_{a}M'$ then ${\access{M}{a}{\community}}={\access{M'}{a}{\community}}$}
	\item if $M\in\clo{a}{s}(\emptyset)$
			then $s\access{M}{a}{\community}s$
	\item \colorbox[gray]{0.75}{if $s\access{\sign{M}{b}}{a}{\community}s'$ then $M\in\clo{b}{s'}(\emptyset)$}
	\item \colorbox[gray]{0.75}{for all $b\in\community\cup\set{a}$,  
		$({\access{\sign{M}{a}}{b}{\community\cup\set{a}}}\circ{\access{M}{a}{\community}})\subseteq{\access{\pair{M}{b}}{a}{\community}}$}
			\end{itemize}
	\item $(\aModalFrame,\mathcal{V})$ designates a (modal) \emph{model} for LiiP.
	\end{itemize}
\end{definition}
Looking back, 
	we recognise that Corollary~\ref{corollary:ConcreteAccessibility} actually establishes the important fact that
		our concrete accessibility $\pAccess{M}{a}{\community}$ in 
		Definition~\ref{definition:SemanticIngredients} realises 
			all the properties stipulated by  
		our abstract accessibility $\access{M}{a}{\community}$ in Definition~\ref{definition:KripkeModel};
		we say that 
		$$\text{$\pAccess{M}{a}{\community}$ \emph{exemplifies} (or \emph{realises}) $\access{M}{a}{\community}$.}$$
Further, observe that 
	LiiP (like LiP) has a Herbrand-style semantics, \ie
			logical constants (agent names) and 
			functional symbols (pairing, signing) are self-interpreted rather than 
		interpreted in terms of (other, semantic) constants and functions.
This simplifying design choice spares our framework from 
	the additional complexity that would arise from term-variable assignments \cite{FOModalLogic}, which in turn
		keeps our models propositionally modal.
Our choice is admissible because our individuals (messages) are finite.
(Infinitely long ``messages'' are non-messages; they can never be completely received, \eg
	transmitting irrational numbers as such is impossible.)

\begin{theorem}[Axiomatic adequacy]\label{theorem:Adequacy}\ 
	$\LiiPded$ is \emph{adequate} for $\models$, \ie:
	\begin{enumerate}
		\item if $\LiiPded\phi$ then $\models\phi$\quad(axiomatic soundness)
		\item if $\models\phi$ then $\LiiPded\phi$\quad(semantic completeness).
	\end{enumerate}
\end{theorem}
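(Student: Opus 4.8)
The plan is to verify soundness by structural induction on the derivation $\LiiPded\phi$, \ie by showing that each axiom schema in $\Gamma_1$ is valid ($\models$) and that each deduction rule (MP, N, epistemic bitonicity) preserves validity. For the propositional axioms $\Gamma_0$ and MP this is routine classical reasoning. For the individual-knowledge axioms (own name, signature synthesis/analysis, pairing) I would use the predefined valuation $\mathcal{V}(\knows{a}{M})=\setst{s}{M\in\clo{a}{s}(\emptyset)}$ and read each axiom off directly from the corresponding closure condition of $\clo{a}{}$ in Definition~\ref{definition:SemanticIngredients}. The crucial modal cases are the three grey-shaded laws. For Kripke's law K and for necessitation N, validity follows from the fact that `$\models\proves{M}{\phi}{a}{\community}$' is defined as a standard universal (box-style) quantification over $\access{M}{a}{\community}$-successors, so normality is automatic. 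For epistemic truthfulness I would invoke conditional reflexivity of the abstract accessibility: if $\knows{a}{M}$ holds at $s$ then $M\in\clo{a}{s}(\emptyset)$, hence $s\access{M}{a}{\community}s$, so truth of $\phi$ at all successors yields truth at $s$ itself. Epistemic bitonicity is validated by the semantic-interface clause ``if $M\equiv_a M'$ then $\access{M}{a}{\community}=\access{M'}{a}{\community}$'': equal individual-knowledge conditions collapse the two accessibility relations, hence the two proof modalities coincide. Group decomposition is handled by communal monotonicity of $\access{}{}{}$, and nominal peer review by the communal-transitivity clause (clause 5 of the interface), which is exactly engineered to match the composition $\access{\sign{M}{a}}{b}{\community\cup\set a}\circ\access{M}{a}{\community}\subseteq\access{\pair{M}{b}}{a}{\community}$ appearing in that axiom's antecedent/consequent split.

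\textbf{Completeness (2).}
For completeness I would build a canonical model in the standard Lindenbaum style, adapted to the Herbrand setting. First I would show that every LiiP-consistent set of formulas extends to a maximal consistent set (MCS), using compactness of $\Clo{}{}$ from Proposition~\ref{proposition:Hilbert}. The canonical state space $\states$ is the set of all MCSs $s$, and I would define the canonical valuation by $s\in\mathcal{V}(P)$ :iff $P\in s$; the constraint on $\mathcal{V}(\knows{a}{M})$ is then matched by the knowledge axioms, which force the membership pattern of the atoms $\knows{a}{M}$ in each MCS to behave exactly like a $\clo{a}{}$-closed set, so that I can define a canonical raw-data extractor $\msgs{a}(s)$ inducing the same individual-knowledge base. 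The canonical accessibility is the usual $s\access{M}{a}{\community}s'$ :iff $\setst{\phi}{(\proves{M}{\phi}{a}{\community})\in s}\subseteq s'$. The two directions of the standard truth lemma ($(\aModalFrame,\mathcal{V}),s\models\phi$ iff $\phi\in s$) then proceed by induction on $\phi$, with the modal case using normality (K plus N) for the forward direction and an MCS-extension argument for the backward direction. Finally I must check that this canonical accessibility satisfies all five clauses of the semantic interface; this is where the axioms do their work, \ie communal monotonicity from group decomposition, conditional stability from epistemic bitonicity, conditional reflexivity from epistemic truthfulness, and the signature/communal-transitivity clauses from the signature axioms together with nominal peer review.

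\textbf{Main obstacle.}
The genuinely delicate step is completeness, specifically verifying that the canonical accessibility satisfies the \emph{last two} interface clauses, \ie the signature property and communal transitivity. These are not pure Kripke-frame conditions of the familiar correspondence-theoretic kind (reflexivity, transitivity, \etc); they mix the modal accessibility with the set-theoretic data-mining operator $\clo{}{}$ and with the communal preorder $\preorder{\community}{}$, and the matching axiom (nominal peer review) has the awkward shape of a conjunction over $b\in\community\cup\set a$ relating a ``can prove'' hypothesis to a ``does prove'' conclusion about $\sign{M}{a}$. Extracting from membership facts in MCSs the required composition $\access{\sign{M}{a}}{b}{\community\cup\set a}\circ\access{M}{a}{\community}\subseteq\access{\pair{M}{b}}{a}{\community}$ will require carefully transporting the peer-review axiom through the canonical successor construction, and I expect this to be the technically heaviest part. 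A cleaner alternative, which I would pursue if the direct canonical argument becomes unwieldy, is to prove completeness \emph{relative to the abstract interface} by simply exhibiting that the Lindenbaum MCSs together with the canonical relations form a model satisfying the interface, and then to note that Corollary~\ref{corollary:ConcreteAccessibility} already guarantees the concrete construction $\pAccess{M}{a}{\community}$ exemplifies that same interface, so soundness and completeness are both stated against the abstract $\access{M}{a}{\community}$ and no separate reconciliation of concrete and abstract semantics is needed.
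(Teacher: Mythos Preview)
Your plan is essentially the paper's own: soundness by checking each axiom against the corresponding semantic-interface clause, completeness via a Lindenbaum canonical model with the standard box-style canonical accessibility and a verification that this accessibility satisfies the five interface clauses. Two small corrections are worth making.

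First, for the soundness of nominal peer review you cite only communal transitivity, but the conclusion of that axiom is $\proves{\sign{M}{a}}{(\knows{a}{M}\land\proves{M}{\phi}{a}{\community})}{b}{\community\cup\set{a}}$, a conjunction. Communal transitivity gives you the second conjunct $\proves{M}{\phi}{a}{\community}$ at every $\sign{M}{a}$-successor, but to get the first conjunct $\knows{a}{M}$ there you need the \emph{signature property} of the interface (if $s\access{\sign{M}{a}}{b}{\community\cup\set{a}}s'$ then $M\in\clo{a}{s'}(\emptyset)$). The paper uses both clauses jointly.

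Second, you overestimate the difficulty of your ``main obstacle''. In the canonical model, clause~(4) follows in two lines from the derived law of \emph{authentic knowledge} (Theorem~\ref{theorem:SomeUsefulDeducibleLogicalLaws}.6), which puts $\proves{\sign{M}{b}}{\knows{b}{M}}{a}{\community\cup\set{b}}$ into every MCS, plus group decomposition; clause~(5) is an equally direct chase of nominal peer review through two canonical successors. No special machinery beyond ``the axiom is in $w$, apply \emph{modus ponens}, use the definition of the canonical relation'' is needed. Your proposed ``cleaner alternative'' is not an alternative at all: the adequacy theorem is already stated against the abstract interface $\access{M}{a}{\community}$, and Corollary~\ref{corollary:ConcreteAccessibility} serves only to exhibit a concrete witness, not to close any gap in the completeness argument.
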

\begin{proof}
	Both parts can be proved with standard means:
	soundness follows as usual from 
		the admissibility of the axioms and rules 
			(\cf Appendix~\ref{appendix:AxiomaticSoundness}); and 
	completeness follows by means of the classical construction of canonical models,
		using Lindenbaum's construction of maximally consistent sets 
			(\cf Appendix~\ref{appendix:LiiPCompleteness}).
\end{proof}

\section{LiP as an extension of LiiP}\label{section:LiP}
In this section,
	we reconstruct LiP 
	syntactically, as a minimal conservative extension of LiiP with
		one simplified and 
		one additional axiom schema, as well as 
	semantically, with a simplified semantic interface that   
		has none of the \emph{a posteriori} constraints from \cite{LiP} but  
		only standard, \emph{a priori} constraints, \ie stipulations.

\begin{theorem}\label{theorem:LiiPplus}
	Define the $\LiiP$-theory $$\LiiPplus\defeq\Clo{}{}(\set{\underbrace{(\proves{M}{\phi}{a}{\community})\limp
			\proves{\pair{M}{M'}}{\phi}{a}{\community}}_{\text{proof extension}}}),$$
				where $\Clo{}{}$ is as in Definition~\ref{definition:AxiomsRules}.
	Then $\LiiPplus$ is isomorphic to $\LiP$, in symbols,  
		$$\LiiPplus\cong\LiP.$$
	In particular,
		the generalised Kripke law GK as mentioned before and below is deducible in $\LiiPplus$,  
			and thus we need only stipulate the simpler standard Kripke law K for LiP, like for LiiP.
	Moreover,
		\emph{alternatively to adding} the axiom schema of proof extension to LiiP,
			we could \emph{equivalently replace} the primitive rule schema of epistemic bitonicity in LiiP with  
				the stronger one of epistemic antitonicity.
\end{theorem}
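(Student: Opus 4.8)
The plan is to read $\cong$ as ordinary theory-identity along the notational bijection $\iota$ that renames the LiiP modality ``$::$'' to the LiP modality ``$:$''; since the two languages are otherwise literally identical, $\LiiPplus\cong\LiP$ amounts to the claim that $\LiiPplus$ and $\LiP$ derive the same formulas under $\iota$. As the two systems share every schema except the three grey-shaded pairs---K versus the generalised Kripke law GK, nominal versus plain peer review, and the rules of epistemic bitonicity versus antitonicity---it suffices to derive each system's three distinguishing schemas inside the other (rules being shown admissible). I would arrange this as a block of ``easy'' mutual derivations, one genuinely delicate direction, and then read off the two announced corollaries.

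First the easy directions. $\LiiPplus$ derives GK: given $\proves{M}{(\phi\limp\phi')}{a}{\community}$ and $\proves{M'}{\phi}{a}{\community}$, I would lift both premises to the common proof term $\pair{M}{M'}$ by proof extension (using proof commutativity, Theorem~\ref{theorem:SomeUsefulDeducibleStructuralLaws}.9, for the second premise) and then combine them with K at $\pair{M}{M'}$; this is exactly the ``in particular'' claim, so only K need be stipulated for LiP. Conversely $\LiP$ derives K by instantiating GK at $M'\defeq M$ and collapsing $\pair{M}{M}$ via proof idempotency (Theorem~\ref{theorem:SomeUsefulDeducibleStructuralLaws}.8). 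In the peer-review pair, from nominal peer review and proof extension one immediately obtains plain peer review, since proof extension turns plain's antecedent $\proves{M}{\phi}{a}{\community}$ into each nominal antecedent $\proves{\pair{M}{b}}{\phi}{a}{\community}$ and the nominal conjuncts then chain. Finally, $\LiP$ derives proof extension by applying epistemic antitonicity to left projection $\knows{a}{\pair{M}{M'}}\limp\knows{a}{M}$; the bitonicity and antitonicity rules are interderivable in the presence of proof extension, as detailed in the last paragraph.

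The step I expect to be the main obstacle is the remaining direction, that $\LiP$ derives nominal peer review. The difficulty is structural: once proof extension is available, nominal peer review is strictly \emph{stronger} than plain peer review, because each conjunct's antecedent $\proves{\pair{M}{b}}{\phi}{a}{\community}$ is \emph{weaker} than plain's shared antecedent $\proves{M}{\phi}{a}{\community}$ (proof extension yields only $(\proves{M}{\phi}{a}{\community})\limp\proves{\pair{M}{b}}{\phi}{a}{\community}$, whose converse fails for $b\neq a$). Hence no chaining works, and instantiating plain peer review at the compound term $\pair{M}{b}$ only yields the proof term $\sign{\pair{M}{b}}{a}$ with content $\knows{a}{\pair{M}{b}}\land\proves{\pair{M}{b}}{\phi}{a}{\community}$, which neither antitonicity nor conditional stability can reduce to $\sign{M}{a}$ with content $\knows{a}{M}\land\proves{M}{\phi}{a}{\community}$ (since $b$ cannot forge $a$'s signature). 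I would therefore close this direction semantically, via adequacy (Theorem~\ref{theorem:Adequacy}): the frame correlate of nominal peer review is precisely the communal-transitivity inclusion ${\access{\sign{M}{a}}{b}{\community\cup\set{a}}}\circ{\access{M}{a}{\community}}\subseteq{\access{\pair{M}{b}}{a}{\community}}$ of Corollary~\ref{corollary:ConcreteAccessibility}.5, together with the signature property Corollary~\ref{corollary:ConcreteAccessibility}.4 supplying the $\knows{a}{M}$ conjunct. Showing that the simplified LiP interface constructed in this section stipulates these same inclusions makes nominal peer review valid on every LiP-model, whence $\LiP$ derives it by semantic completeness. Matching the two interfaces is thus the crux, and is exactly where the paper's simplification of LiP's semantic interface does the work.

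It remains to prove the ``Moreover''. I would show that replacing epistemic bitonicity by epistemic antitonicity in LiiP yields exactly $\LiiPplus$, by three short derivations: (i) antitonicity derives proof extension, again from left projection; (ii) bitonicity together with proof extension derives antitonicity, turning a theorem $\knows{a}{M}\limp\knows{a}{M'}$ into $(\proves{M'}{\phi}{a}{\community})\limp\proves{M}{\phi}{a}{\community}$ by neutral proof elements (Theorem~\ref{theorem:SomeUsefulDeducibleStructuralLaws}.10) and proof commutativity; and (iii) antitonicity derives bitonicity by splitting the biconditional hypothesis and applying (i)'s rule in both directions. Chaining (i)--(iii) gives mutual inclusion of the two theories, completing the argument.
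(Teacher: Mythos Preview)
Your derivations of GK, plain peer review, and epistemic antitonicity inside $\LiiPplus$, and your three-step treatment of the ``Moreover'' clause, are essentially the paper's proof: the paper argues each of the three forward items with the same ingredients you name (proof extension plus K plus proof commutativity for GK; proof extension chained into NPR for plain peer review; neutral proof elements plus proof extension plus commutativity for antitonicity) and closes the ``Moreover'' by the same observation that antitonicity together with left projection yields proof extension.

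Where you diverge from the paper is in the converse inclusion $\LiP\supseteq\LiiPplus$. The paper does \emph{not} derive K, bitonicity, or---the case you single out---NPR inside $\LiP$; after the three forward derivations it writes only ``Conversely, \ldots\ proof extension is directly deducible from jointly this assumption and (pair) left projection, like in LiP'' and stops, tacitly leaving the remaining converse items to the cited work. So the NPR obstruction you analyse is simply absent from the paper's own argument. Your semantic detour via adequacy is therefore genuinely extra, and you should be aware that it risks circularity: invoking semantic completeness of $\LiP$ for the simplified interface of the subsequent Corollary requires the canonical $\LiP$-model to satisfy the communal-transitivity inclusion, and the verification of exactly that clause (item~5 in Appendix~\ref{appendix:LiiPCompleteness}) is carried out using NPR itself. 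Unless $\LiP$'s adequacy for that interface is imported independently from~\cite{LiP}, your semantic route presupposes what it is meant to establish; the paper sidesteps this by not attempting the derivation at all.
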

\begin{proof} The isomorphism consists in simply switching between 
	proof-modality notations, which
		in $\LiP$ is `$\pproves{}{}{}{}$' and 
		in $\LiiPplus$ `$\proves{}{}{}{}$'.
	Then, as already mentioned on Page~\pageref{page:LiiPvsLiP},
		LiiP and LiP differ in the following corresponding axiom and deduction-rule schemas: 
			Kripke's law K versus the generalised Kripke-law GK,
			nominal peer review (NPR) versus plain peer review, and 
			epistemic bitonicity versus epistemic antitonicity---see below.
	Note that in the sequel   
		PL abbreviates ``(Classical) Propositional Logic,'' and
		$\LiiPplusDed$ is defined similarly to $\LiiPded$.
	\begin{itemize}
		\item GK (\cf Line 7) becomes deducible:
			\begin{enumerate}
				\item$\LiiPplusDed(\proves{M}{(\phi\limp\phi')}{a}{\community})\limp
							\proves{\pair{M}{M'}}{(\phi\limp\phi')}{a}{\community}$\hfill \textbf{\emph{proof extension}}
				\item$\LiiPplusDed(\proves{\pair{M}{M'}}{(\phi\limp\phi')}{a}{\community})\limp
							((\proves{\pair{M}{M'}}{\phi}{a}{\community})\limp\proves{\pair{M}{M'}}{\phi'}{a}{\community})$\hfill K 
				\item$\LiiPplusDed(\proves{M}{(\phi\limp\phi')}{a}{\community})\limp
							((\proves{\pair{M}{M'}}{\phi}{a}{\community})\limp\proves{\pair{M}{M'}}{\phi'}{a}{\community})$\hfill 1, 2 PL
				\item$\LiiPplusDed(\proves{M'}{\phi}{a}{\community})\limp
							\proves{\pair{M'}{M}}{\phi}{a}{\community}$\hfill \textbf{\emph{proof extension}}
				\item$\LiiPplusDed(\proves{\pair{M'}{M}}{\phi}{a}{\community})\lequiv
							\proves{\pair{M}{M'}}{\phi}{a}{\community}$\hfill proof commutativity
				\item$\LiiPplusDed(\proves{M'}{\phi}{a}{\community})\limp
							\proves{\pair{M}{M'}}{\phi}{a}{\community}$\hfill 4, 5, PL
				\item$\LiiPplusDed(\proves{M}{(\phi\limp\phi')}{a}{\community})\limp
							((\proves{M'}{\phi}{a}{\community})\limp\proves{\pair{M}{M'}}{\phi'}{a}{\community})$\hfill 3, 6, PL.
			\end{enumerate}
		\item plain peer review (\cf Line 3) becomes deducible:
			\begin{enumerate}
				\item$\LiiPplusDed\bigwedge_{b\in\community\cup\set{a}}((\proves{M}{\phi}{a}{\community})\limp
							\proves{\pair{M}{b}}{\phi}{a}{\community})$\hfill \textbf{\emph{proof extension}}
				\item$\LiiPplusDed\bigwedge_{b\in\community\cup\set{a}}((\proves{\pair{M}{b}}{\phi}{a}{\community})\limp\proves{\sign{M}{a}}{(\knows{a}{M}\land\proves{M}{\phi}{a}{\community})}{b}{\community\cup\set{a}})$\hfill NPR 
				\item$\LiiPplusDed(\proves{M}{\phi}{a}{\community})\limp\bigwedge_{b\in\community\cup\set{a}}(\proves{\sign{M}{a}}{(\knows{a}{M}\land\proves{M}{\phi}{a}{\community})}{b}{\community\cup\set{a}})$\hfill 1, 2, PL.
			\end{enumerate}
		\item epistemic antitonicity (\cf Line 8) becomes deducible:
			\begin{enumerate}
				\item\quad$\LiiPplusDed\knows{a}{M}\limp\knows{a}{M'}$\hfill hyp.
				\item\quad$\LiiPplusDed(\proves{\pair{M}{M'}}{\phi}{a}{\community})\lequiv
							\proves{M}{\phi}{a}{\community}$\hfill 1, neutral proof elememts
				\item\quad$\LiiPplusDed(\proves{M'}{\phi}{a}{\community})\limp
							\proves{\pair{M'}{M}}{\phi}{a}{\community}$\hfill \textbf{\emph{proof extension}}
				\item\quad$\LiiPplusDed(\proves{\pair{M'}{M}}{\phi}{a}{\community})\lequiv
							\proves{\pair{M}{M'}}{\phi}{a}{\community}$\hfill proof commutativity
				\item\quad$\LiiPplusDed(\proves{M'}{\phi}{a}{\community})\limp
							\proves{\pair{M}{M'}}{\phi}{a}{\community}$\hfill 3, 4, PL
				\item\quad$\LiiPplusDed(\proves{M'}{\phi}{a}{\community})\limp
							\proves{M}{\phi}{a}{\community}$\hfill\hfill 2, 5, PL
				\item if $\LiiPplusDed\knows{a}{M}\limp\knows{a}{M'}$ 
							then $\LiiPplusDed(\proves{M'}{\phi}{a}{\community})\limp
									\proves{M}{\phi}{a}{\community}$\hfill 1--6, PL
				\item$\set{\knows{a}{M}\limp\knows{a}{M'}}\LiiPplusDed
							(\proves{M'}{\phi}{a}{\community})\limp
								\proves{M}{\phi}{a}{\community}$\hfill 7, def.
			\end{enumerate}
	\end{itemize}
	Conversely, that is, assuming epistemic antitonicity, 
		proof extension is directly deducible from jointly this assumption and (pair) left projection, 
			like in LiP \cite{LiP}.
\end{proof}

\begin{corollary}[Simplified semantic interface for LiP]
A simplified semantic interface for LiP is given by 
	the one for LiiP in Definition~\ref{definition:KripkeModel} but 
		with the abstract accessibility ${\access{M}{a}{\community}}\subseteq\states\times\states$ 
			being constrained   
		\begin{itemize}
		\item such that
			\fbox{if $M\sqsubseteq_{a}M'$ then ${\access{M}{a}{\community}}\subseteq{\access{M'}{a}{\community}}$\quad(proof monotonicity)}
			
		\textbf{\emph{instead of}} being constrained by conditional stability;  
		\item \emph{or alternatively} such that
		\fbox{${\access{\pair{M}{M'}}{a}{\community}}\subseteq{\access{M}{a}{\community}}$\quad(pair splitting)} 
		
		\textbf{\emph{in addition to}} being constrained by conditional stability.
		\end{itemize}
\end{corollary}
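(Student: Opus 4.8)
The plan is to obtain both interfaces from the isomorphism $\LiiPplus\cong\LiP$ (Theorem~\ref{theorem:LiiPplus}) together with the adequacy of LiiP (Theorem~\ref{theorem:Adequacy}), by reading off the \emph{frame correspondents} of the two syntactic ways in which $\LiiPplus$ extends LiiP. Recall that $\LiiPplus$ is LiiP \emph{plus} the proof-extension schema $(\proves{M}{\phi}{a}{\community})\limp\proves{\pair{M}{M'}}{\phi}{a}{\community}$, equivalently LiiP with epistemic bitonicity \emph{replaced} by epistemic antitonicity. Since $\proves{M}{\phi}{a}{\community}$ is the box of $\access{M}{a}{\community}$, the matching frame constraints are immediate: proof extension is validated by \emph{shrinking} accessibility along pairing, i.e.\ by pair splitting $\access{\pair{M}{M'}}{a}{\community}\subseteq\access{M}{a}{\community}$; and the antitonicity rule is validated by proof monotonicity, because $\knows{a}{M}\limp\knows{a}{M'}$ is valid iff $M\sqsubseteq_{a}M'$ (by the predefined valuation of `$\knows{}{}$'), while $(\proves{M'}{\phi}{a}{\community})\limp\proves{M}{\phi}{a}{\community}$ is valid for all $\phi$ iff $\access{M}{a}{\community}\subseteq\access{M'}{a}{\community}$.

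First I would re-run the adequacy argument of Theorem~\ref{theorem:Adequacy} for $\LiiPplus$, keeping all LiiP constraints and adding pair splitting. Soundness is the one-line inclusion argument above. For completeness I would take the canonical LiiP-model with $s\access{M}{a}{\community}s'$ iff $\setst{\psi}{(\proves{M}{\psi}{a}{\community})\in s}\subseteq s'$, and check that it satisfies pair splitting: if $(\proves{M}{\psi}{a}{\community})\in s$ then, by the proof-extension schema and maximality, $(\proves{\pair{M}{M'}}{\psi}{a}{\community})\in s$, so every $\pair{M}{M'}$-successor of $s$ is an $M$-successor. The remaining interface conditions are inherited verbatim from Appendix~\ref{appendix:LiiPCompleteness}, and adding an inclusion does not disturb them. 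Transporting this back along the notational isomorphism yields the ``pair splitting in addition to conditional stability'' interface for LiP.

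Rather than redo the canonical construction for the antitonicity presentation, I would then show that over LiiP-frames the two stated constraints cut out the same class of frames, mirroring the interderivability asserted at the end of Theorem~\ref{theorem:LiiPplus}. Proof monotonicity implies conditional stability (apply it to both halves of $M\equiv_{a}M'$), so the phrasing ``instead of'' is legitimate, and it implies pair splitting since $\pair{M}{M'}\sqsubseteq_{a}M$ always holds by the unpairing clause of $\clo{a}{s}$. Conversely, from conditional stability and pair splitting I would recover proof monotonicity: if $M\sqsubseteq_{a}M'$ then pairing/unpairing closure gives $\pair{M}{M'}\equiv_{a}M$, whence conditional stability yields $\access{M}{a}{\community}=\access{\pair{M}{M'}}{a}{\community}$, and this equals $\access{\pair{M'}{M}}{a}{\community}$ (commutativity, again via conditional stability), which pair splitting places inside $\access{M'}{a}{\community}$. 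Hence the ``proof monotonicity instead of conditional stability'' interface is adequate as well.

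The load-bearing step is the completeness half: one must confirm that the canonically constructed accessibility genuinely meets the new \emph{a priori} constraint, so that none of the \emph{a posteriori} constraints of \cite{LiP} are needed. As sketched, this reduces to a single use of the proof-extension schema on maximally consistent sets and goes through cleanly; the only real care is to verify that the frame-class equivalence above invokes exactly the pairing/unpairing closure properties of $\clo{a}{s}$ and nothing stronger, and that the full appendix machinery transfers unchanged to the extended logic.
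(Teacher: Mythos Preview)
Your proposal is correct and follows the same essential idea as the paper's own proof: identify the frame correspondents of the two syntactic additions (proof extension $\leftrightarrow$ pair splitting; epistemic antitonicity $\leftrightarrow$ proof monotonicity) and invoke their interderivability from Theorem~\ref{theorem:LiiPplus}. The paper's proof is in fact just this one sentence, declaring the correspondence ``straightforward to check''.

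Where you go further than the paper is in (i) actually re-running the canonical-model completeness step to witness that pair splitting holds canonically, and (ii) establishing the equivalence of the two interfaces \emph{directly at the frame level} (proof monotonicity $\Rightarrow$ conditional stability and pair splitting; conditional stability $+$ pair splitting $\Rightarrow$ proof monotonicity via $\pair{M}{M'}\equiv_{a}M$ and commutativity). The paper instead relies on the \emph{syntactic} interderivability already proved in Theorem~\ref{theorem:LiiPplus} and leaves the semantic correspondence implicit. Your route is more self-contained and makes explicit why the ``instead of'' and ``in addition to'' phrasings carve out the same frame class; the paper's route is shorter but leans harder on what has come before. Both are sound.
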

\begin{proof}
	It is straightforward to check that 
		the semantic constraints of proof monotonicity and pair splitting correspond to 
			the syntactic laws of epistemic antitonicity and proof extension, respectively,
				which are interdeducible (\cf Theorem~\ref{theorem:LiiPplus}).
\end{proof}

\section{Conclusion}
We have proposed LiiP with 
	as main contributions those described in Section~\ref{section:contribution}.
The notion of non-monotonic proofs captured by LiiP has the advantage of being
	not only operational thanks to our proof-theoretic definition 
	but also declarative thanks to our complementary model-theoretic definition, which
		gives a constructive epistemic semantics to these proofs in the sense of
			explicating \emph{what} (knowledge) they effect in agents in the instant of their reception, 
				complementing thereby 
					the (operational) axiomatics, which explicates \emph{how} they do so. 

We conclude by mentioning \cite{BaltagRenneSmets} as a piece of related work.
There, 
	the authors present a \emph{resource-bounded implicit-single-agent} but dynamic logic of 
		defeasible (and thus non-monotonic) evidence-based S4-knowledge, where 
		they use a particular primitive $Et$ for 
			the implicit-agent's knowledge of evidence terms $t$.
The authors' atomic proposition $Et$ is a particular and strongly resource-bounded analog of 
	my atomic proposition $\knows{a}{M}$ for 
		an arbitrary agent $a$'s knowledge of message terms $M$. 
$Et$ is strongly resource-bounded in the sense that
	the term axioms for $Et$ are axioms for term decomposition but not for term composition.
Similar restrictions could be made for $\knows{a}{M}$, but
	we opine that they would be too strong.
At least \emph{some} amount of term composition capabilities should be conceded also to resource-bounded agents.
The authors' use of $Et$ is crucial for their contribution, who 
	know but must have accidentally not acknowledged the contribution of $\knows{a}{M}$ to $Et$.
See \cite{LiP} for historical references of my uses of $\knows{a}{M}$ in 
	logics of explicit evidence/justification/proof.
The addition of atomic propositions $\knows{a}{M}$ 
	to languages of explicit evidence/justification/proof 
		will probably play a similarly important role as 
the addition of atomic propositions $x\in S$ 
	to the language of first-order logic (resulting in Set Theory).

\bibliographystyle{alpha}

\appendix
\section{Axiomatic-adequacy proof}
	\newcommand{\canrel}[3]{\mathrel{_{#1}\negthinspace\mathrm{C}_{#2}^{#3}}}
	\newcommand{\canVal}{\mathcal{V}_{\mathsf{C}}}

\subsection{Axiomatic soundness}\label{appendix:AxiomaticSoundness}	
\begin{definition}[Truth \& Validity \cite{ModalLogicSemanticPerspective}]\label{definition:TruthValidity}\  
	\begin{itemize}
	\item The formula $\phi\in\pFormulas$ is \emph{true} (or \emph{satisfied}) 
		in the model $(\aModalFrame,\mathcal{V})$ at the state $s\in\states$ 
			:iff $(\aModalFrame,\mathcal{V}), s\models\phi$.
	\item The formula $\phi$ is \emph{satisfiable} in the model $(\aModalFrame,\mathcal{V})$ 
			:iff there is $s\in\states$ such that 
				$(\aModalFrame,\mathcal{V}), s\models\phi$.
	\item The formula $\phi$ is \emph{globally true} (or \emph{globally satisfied}) 
		in the model $(\aModalFrame,\mathcal{V})$, 
			written $(\aModalFrame,\mathcal{V})\models\phi$, :iff 
				for all $s\in\states$, $(\aModalFrame,\mathcal{V}),s\models\phi$.
	\item The formula $\phi$ is \emph{satisfiable}  
			:iff there is a model $(\aModalFrame,\mathcal{V})$ and a state $s\in\states$ such that 
				$(\aModalFrame,\mathcal{V}),s\models\phi$.	
	\item The formula $\phi$ is \emph{valid}, written $\models\phi$, :iff 
			for all models $(\aModalFrame,\mathcal{V})$, $(\aModalFrame,\mathcal{V})\models\phi$.
	\end{itemize}
\end{definition}

\begin{proposition}[Admissibility of LiiP-specific axioms and rules]\label{proposition:AxiomAndRuleAdmissibility}\ 
	\begin{enumerate}
		\item $\models\knows{a}{a}$
		\item $\models\knows{a}{M}\limp\knows{a}{\sign{M}{a}}$
		\item $\models\knows{a}{\sign{M}{b}}\limp\knows{a}{\pair{M}{b}}$
		\item $\models(\knows{a}{M}\land\knows{a}{M'})\lequiv\knows{a}{\pair{M}{M'}}$
		\item $\models(\proves{M}{(\phi\limp\phi')}{a}{\community})\limp
				((\proves{M}{\phi}{a}{\community})\limp\proves{M}{\phi'}{a}{\community})$
		\item $\models(\proves{M}{\phi}{a}{\community})\limp(\knows{a}{M}\limp\phi)$
		\item $\models\bigwedge_{b\in\community\cup\set{a}}((\proves{\pair{M}{b}}{\phi}{a}{\community})\limp\proves{\sign{M}{a}}{(\knows{a}{M}\land\proves{M}{\phi}{a}{\community})}{b}{\community\cup\set{a}})$
		\item $\models(\proves{M}{\phi}{a}{\community\cup\community'})\limp\proves{M}{\phi}{a}{\community}$
		\item If $\models\phi$ then $\models\proves{M}{\phi}{a}{\community}$
		\item If $\models\knows{a}{M}\lequiv\knows{a}{M'}$ then 
					$\models(\proves{M}{\phi}{a}{\community})\lequiv\proves{M'}{\phi}{a}{\community}$.
	\end{enumerate}
\end{proposition}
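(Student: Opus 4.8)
The plan is to verify each item by unfolding the satisfaction relation of Table~\ref{table:SatisfactionRelation}, exploiting that the valuation is predefined so that $(\aModalFrame,\mathcal{V}),s\models\knows{a}{M}$ holds exactly when $M\in\clo{a}{s}(\emptyset)$, and that $\proves{M}{\phi}{a}{\community}$ is a box-modality over the abstract accessibility $\access{M}{a}{\community}$, which by Definition~\ref{definition:KripkeModel} satisfies the five interface constraints. The four knowledge axioms (Items~1--4) reduce immediately to the corresponding closure conditions of the data-mining operator $\clo{a}{}$, read off at an arbitrary state $s$: Item~1 to $a\in\clo{a}{s}(\emptyset)$ (the base case), Item~2 to personal signature synthesis, Item~3 to universal signature analysis, and Item~4 to the pairing and unpairing conditions. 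Items~5 and~9 are the standard normality facts: Kripke's law holds because the box distributes over implication through the universal quantification over $\access{M}{a}{\community}$-successors, and necessitation holds because a globally valid $\phi$ is true in particular at every successor of every state.

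For the remaining proof-modality items I would invoke one interface constraint each. For Item~6 (epistemic truthfulness), assuming $s\models\proves{M}{\phi}{a}{\community}$ and $s\models\knows{a}{M}$, i.e.\ $M\in\clo{a}{s}(\emptyset)$, conditional reflexivity yields $s\access{M}{a}{\community}s$, whence $s\models\phi$. For Item~8 (group decomposition), since $\community\subseteq\community\cup\community'$, communal monotonicity gives $\access{M}{a}{\community}\subseteq\access{M}{a}{\community\cup\community'}$, so the box over the larger relation implies the box over the smaller one. For Item~10 (epistemic bitonicity), I would first observe that the validity of $\knows{a}{M}\lequiv\knows{a}{M'}$ says precisely that in every frame, for all $s\in\states$, $M\in\clo{a}{s}(\emptyset)$ iff $M'\in\clo{a}{s}(\emptyset)$, that is, $M\equiv_{a}M'$ holds in that frame; conditional stability then gives $\access{M}{a}{\community}=\access{M'}{a}{\community}$, so the two boxes coincide at every state. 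The one subtlety worth spelling out here is exactly this re-reading of premise-validity as the frame-internal message equivalence $M\equiv_{a}M'$.

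The main obstacle is Item~7 (nominal peer review), the only item requiring two interface constraints at once. Fixing $b\in\community\cup\set{a}$ and assuming $s\models\proves{\pair{M}{b}}{\phi}{a}{\community}$, I must show $s\models\proves{\sign{M}{a}}{(\knows{a}{M}\land\proves{M}{\phi}{a}{\community})}{b}{\community\cup\set{a}}$; so I take an arbitrary $s'$ with $s\access{\sign{M}{a}}{b}{\community\cup\set{a}}s'$ and verify the conjunction at $s'$. The left conjunct $s'\models\knows{a}{M}$ follows directly from the signature property of the interface, which yields $M\in\clo{a}{s'}(\emptyset)$. For the right conjunct $s'\models\proves{M}{\phi}{a}{\community}$, I take any $s''$ with $s'\access{M}{a}{\community}s''$; then communal transitivity composes $s\access{\sign{M}{a}}{b}{\community\cup\set{a}}s'$ with $s'\access{M}{a}{\community}s''$ into $s\access{\pair{M}{b}}{a}{\community}s''$, and the hypothesis $s\models\proves{\pair{M}{b}}{\phi}{a}{\community}$ delivers $s''\models\phi$. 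The delicate bookkeeping is matching the indices of the generic interface constraints to the specific community $\community\cup\set{a}$ and signer $a$ occurring here, and checking that the composition direction in communal transitivity lines up with the nested box structure of the goal.
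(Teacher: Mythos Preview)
Your proposal is correct and follows essentially the same route as the paper: Items~1--4 from the closure conditions of $\clo{a}{}$, Items~5 and~9 from normality, Items~6, 8, and~10 each from one interface constraint (conditional reflexivity, communal monotonicity, conditional stability respectively), and Item~7 from the signature property plus communal transitivity, exactly as you outline. Your remark that the validity hypothesis in Item~10 must be re-read as the frame-internal equivalence $M\equiv_{a}M'$ is a point the paper leaves implicit, so your write-up is, if anything, slightly more explicit than the original.
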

\begin{proof}
	1--4 are immediate;
	5 and 9 hold by the fact that LiiP has a standard Kripke-semantics; 
	6 follows directly from the conditional reflexivity of `$\access{M}{a}{\community}$', 	
	8 directly from the communal monotonicity of `$\access{M}{a}{\community}$', and
	10 directly from the conditional stability of `$\access{M}{a}{\community}$'.
	Finally,
		7 follows jointly from 
			the signature and 
			the communal-transitivity property of `$\access{M}{a}{\community}$'---as follows:
		let $(\aModalFrame,\mathcal{V})$ designate an arbitrary LiiP-model and 
		let $s\in\states$.
		First,  
			let $b\in\community\cup\set{a}$ and
			suppose that $(\aModalFrame,\mathcal{V}),s\models\proves{\pair{M}{b}}{\phi}{a}{\community}$.
		Second, 
			let $s'\in\states$ and 
			suppose that $s\access{\sign{M}{a}}{b}{\community\cup\set{a}}s'$.
		Hence $M\in\clo{a}{s'}(\emptyset)$ by the signature property, and
			thus $(\aModalFrame,\mathcal{V}),s'\models\knows{a}{M}$ by definition.
		Third,  
			let $s''\in\states$ and 
			suppose that $s'\access{M}{a}{\community}s''$.
		Hence, 
			$s\access{\pair{M}{b}}{a}{\community}s''$ by 
				the first, second, and third supposition and communal transitivity.
		Hence $(\aModalFrame,\mathcal{V}),s''\models\phi$ by the first supposition.
		Thus $(\aModalFrame,\mathcal{V}),s'\models\proves{M}{\phi}{a}{\community}$ by 
				discharge of the third supposition.
		Hence $(\aModalFrame,\mathcal{V}),s'\models\knows{a}{M}\land\proves{M}{\phi}{a}{\community}$.
		Finally, consecutively discharging the remaining three suppositions,
			$(\aModalFrame,\mathcal{V}),s\models\proves{\sign{M}{a}}{(\knows{a}{M}\land\proves{M}{\phi}{a}{\community})}{b}{\community\cup\set{a}}$, then 
			$(\aModalFrame,\mathcal{V}),s\models(\proves{\pair{M}{b}}{\phi}{a}{\community})\limp\proves{\sign{M}{a}}{(\knows{a}{M}\land\proves{M}{\phi}{a}{\community})}{b}{\community\cup\set{a}}$, and then 
			$(\aModalFrame,\mathcal{V}),s\models\bigwedge_{b\in\community\cup\set{a}}((\proves{\pair{M}{b}}{\phi}{a}{\community})\limp\proves{\sign{M}{a}}{(\knows{a}{M}\land\proves{M}{\phi}{a}{\community})}{b}{\community\cup\set{a}})$.

\end{proof}

\subsection{Semantic completeness}\label{appendix:LiiPCompleteness}	
	For all $\phi\in\pFormulas$, 
		if $\models\phi$ then $\LiiPded\phi$.
\begin{proof}
	Let
		\begin{itemize}
			\item $\mathcal{W}$ designate the set of all maximally LiiP-consistent sets\footnote{*
				A set $W$ of LiiP-formulas is maximally LiiP-consistent :iff 
					$W$ is LiiP-consistent and 
					$W$ has no proper superset that is LiiP-consistent.
				A set $W$ of LiiP-formulas is LiiP-consistent :iff 
					$W$ is not LiiP-inconsistent.
				A set $W$ of LiiP-formulas is LiiP-inconsistent :iff 
					there is a finite $W'\subseteq W$ such that $((\bigwedge W')\limp\false)\in\LiiP$.
				Any LiiP-consistent set can be extended to a maximally LiiP-consistent set by means of  
					the Lindenbaum Construction \cite[Page~90]{ModalProofTheory}.
				A set is maximally LiiP-consistent if and only if 
				the set of logical-equivalence classes of the set is an ultrafilter of
				the Lindenbaum-Tarski algebra of LiiP \cite[Page~351]{AlgebrasAndCoalgebras}.
				The canonical frame is isomorphic to the ultrafilter frame of that Lindenbaum-Tarski algebra 
					\cite[Page~352]{AlgebrasAndCoalgebras}.}
			\item for all $w,w'\in\mathcal{W}$,
				$w\canrel{M}{a}{\community}w'$ :iff $\setst{\phi\in\pFormulas}{\proves{M}{\phi}{a}{\community}\in w}\subseteq w'$
			\item for all $w\in\mathcal{W}$, $w\in\canVal(P)$ :iff $P\in w$.
		\end{itemize}
	Then \newcommand{\canModel}{\mathfrak{M}_{\mathsf{C}}}
		$\canModel\defeq
			(\mathcal{W},\set{\canrel{M}{a}{\community}}_{M\in\messages,a\in\agents,\community\subseteq\agents},\canVal)$
		designates the \emph{canonical model} for LiiP.
	Following Fitting \cite[Section~2.2]{ModalProofTheory}, 
	the following useful property of $\canModel$,  
		$$\boxed{$\text{for all $\phi\in\pFormulas$ and $w\in\mathcal{W}$,
			$\phi\in w$ if and only if $\canModel,w\models\phi$,}$}$$
	the so-called \emph{Truth Lemma}, can be proved by induction on the structure of $\phi$:
	\begin{enumerate}
		\item Base case ($\phi\defeq P$ for $P\in\mathcal{P}$). 
			For all $w\in\mathcal{W}$,
				$P\in w$ if and only if $\canModel,w\models P$, 
					by definition of $\canVal$.
		\item Inductive step ($\phi\defeq \neg\phi'$ for $\phi'\in\pFormulas$).
			Suppose that
				for all $w\in\mathcal{W}$,
					$\phi'\in w$ if and only if $\canModel,w\models\phi'$.
			Further let
				$w\in\mathcal{W}$.
			Then, 
				$\neg\phi'\in w$ if and only if $\phi'\not\in w$ --- $w$ is consistent ---
				if and only if $\canModel,w\not\models\phi'$ --- by the induction hypothesis ---
				if and only if $\canModel,w\models\neg\phi'$.
		\item Inductive step ($\phi\defeq \phi'\land\phi''$ for $\phi',\phi''\in\pFormulas$).
			Suppose that 
				for all $w\in\mathcal{W}$,
					$\phi'\in w$ if and only if $\canModel,w\models\phi'$, and that
				for all $w\in\mathcal{W}$,
					$\phi''\in w$ if and only if $\canModel,w\models\phi''$.
			Further let
				$w\in\mathcal{W}$.
			Then, 
				$\phi'\land\phi''\in w$ if and only if 
					($\phi'\in w$ and $\phi''\in w$), because $w$ is maximal.
			Now suppose that
				$\phi'\in w$ and $\phi''\in w$.
			Hence, 
				$\canModel,w\models\phi'$ and 
				$\canModel,w\models\phi''$, by the induction hypotheses, and 
			thus $\canModel,w\models\phi'\land\phi''$.
			Conversely, suppose that
				$\canModel,w\models\phi'\land\phi''$.
			Then,
				$\canModel,w\models\phi'$ and 
				$\canModel,w\models\phi''$.
			Hence,
				$\phi'\in w$ and $\phi''\in w$, by the induction hypotheses.
			Thus,
				($\phi'\in w$ and $\phi''\in w$) if and only if
				($\canModel,w\models\phi'$ and 
				$\canModel,w\models\phi''$).
			Whence
				$\phi'\land\phi''\in w$ if and only if
				($\canModel,w\models\phi'$ and 
				$\canModel,w\models\phi''$), by transitivity.
		\item Inductive step ($\phi\defeq\proves{M}{\phi'}{a}{\community}$ for 
				$M\in\messages$, $a\in\agents$, $\community\subseteq\agents$, and $\phi'\in\pFormulas$).
			\begin{flushleft}
			\nn{4.1} for all $w\in\mathcal{W}$,
						$\phi'\in w$ if and only if $\canModel,w\models\phi'$\hfill ind.\ hyp.\\[\jot]
			\nn{4.2}\quad $w\in\mathcal{W}$\hfill hyp.\\[2\jot]
			\nn{4.3}\qquad $\proves{M}{\phi'}{a}{\community}\in w$\hfill hyp.\\[\jot]
			\nn{4.4}\qquad\quad	$w'\in\mathcal{W}$\hfill hyp.\\[\jot]
			\nn{4.5}\qquad\qquad $w\canrel{M}{a}{\community}w'$\hfill hyp.\\[\jot]
			\nn{4.6}\qquad\qquad $\setst{\phi''\in\pFormulas}{\proves{M}{\phi''}{a}{\community}\in w}\subseteq w'$\hfill 4.5\\[\jot]
			\nn{4.7}\qquad\qquad $\phi'\in\setst{\phi''\in\pFormulas}{\proves{M}{\phi''}{a}{\community}\in w}$\hfill 4.3, 4.6\\[\jot]
			\nn{4.8}\qquad\qquad $\phi'\in w'$\hfill 4.6, 4.7\\[\jot]
			\nn{4.9}\qquad\qquad $\canModel,w'\models\phi'$\hfill 4.1, 4.4, 4.8\\[\jot]
			\nn{4.10}\qquad\quad if $w\canrel{M}{a}{\community}w'$ then $\canModel,w'\models\phi'$\hfill 4.5--4.9\\[\jot]
			\nn{4.11}\qquad for all $w'\in\mathcal{W}$, 
					if $w\canrel{M}{a}{\community}w'$ 
					then $\canModel,w'\models\phi'$\hfill 4.4--4.10\\[\jot]
			\nn{4.12}\qquad $\canModel,w\models\proves{M}{\phi'}{a}{\community}$\hfill 4.11\\[2\jot]
			\nn{4.13}\qquad $\proves{M}{\phi'}{a}{\community}\not\in w$\hfill hyp.\\[\jot]
			\nn{4.14}\qquad\quad $\mathcal{F}=\setst{\phi''\in\pFormulas}{\proves{M}{\phi''}{a}{\community}\in w}\cup\set{\neg\phi'}$\hfill hyp.\\[\jot]
			\nn{4.15}\qquad\qquad $\mathcal{F}$ is LiiP-inconsistent\hfill hyp.\\[\jot]
			\nn{4.16}\qquad\qquad there is $\set{\proves{M}{\phi_{1}}{a}{\community},\ldots,\proves{M}{\phi_{n}}{a}{\community}}\subseteq w$ such that\\
			\nn{}\qquad\qquad $\LiiPded(\phi_{1}\land\ldots\land\phi_{n}\land\neg\phi')\limp\false$\hfill 4.14, 4.15\\[\jot]
			\nn{4.17}\qquad\qquad\quad $\set{\proves{M}{\phi_{1}}{a}{\community},\ldots,\proves{M}{\phi_{n}}{a}{\community}}\subseteq w$ and\\
			\nn{}\qquad\qquad\quad $\LiiPded(\phi_{1}\land\ldots\land\phi_{n}\land\neg\phi')\limp\false$\hfill hyp.\\[\jot]
			\nn{4.18}\qquad\qquad\quad $\LiiPded(\phi_{1}\land\ldots\land\phi_{n})\limp\phi'$\hfill 4.17\\[\jot]
			\nn{4.19}\qquad\qquad\quad $\LiiPded(\proves{M}{(\phi_{1}\land\ldots\land\phi_{n})}{a}{\community})\limp\proves{M}{\phi'}{a}{\community}$\hfill 4.18, regularity\\[\jot]
			\nn{4.20}\qquad\qquad\quad $\LiiPded((\proves{M}{\phi_{1}}{a}{\community})\land\ldots\land(\proves{M}{\phi_{n}}{a}{\community}))\limp\proves{M}{\phi'}{a}{\community}$\hfill 4.19\\[\jot]
			\nn{4.21}\qquad\qquad\quad $\proves{M}{\phi'}{a}{\community}\in w$\hfill 4.17, 4.20, $w$ is maximal\\[\jot]
			\nn{4.22}\qquad\qquad\quad false\hfill 4.13, 4.21\\[\jot]
			\nn{4.23}\qquad\qquad false\hfill 4.16, 4.17--4.22\\[\jot]
			\nn{4.24}\qquad\quad $\mathcal{F}$ is LiiP-consistent\hfill 4.15--4.23\\[\jot]
			\nn{4.25}\qquad\quad there is $w'\supseteq\mathcal{F}$ \st $w'$ is maximally LiiP-consistent\hfill 4.24\\[\jot]
			\nn{4.26}\qquad\qquad $\mathcal{F}\subseteq w'$ and $w'$ is maximally LiiP-consistent\hfill hyp.\\[\jot]
			\nn{4.27}\qquad\qquad $\setst{\phi''\in\pFormulas}{\proves{M}{\phi''}{a}{\community}\in w}\subseteq\mathcal{F}$\hfill 4.14\\[\jot]
			\nn{4.28}\qquad\qquad $\setst{\phi''\in\pFormulas}{\proves{M}{\phi''}{a}{\community}\in w}\subseteq w'$\hfill 4.26, 4.27\\[\jot]
			\nn{4.29}\qquad\qquad $w\canrel{M}{a}{\community}w'$\hfill 4.28\\[\jot]
			\nn{4.30}\qquad\qquad $w'\in\mathcal{W}$\hfill 4.26\\[\jot]
			\nn{4.31}\qquad\qquad $\neg\phi'\in\mathcal{F}$\hfill 4.14\\[\jot]
			\nn{4.32}\qquad\qquad $\neg\phi'\in w'$\hfill 4.26, 4.31\\[\jot]
			\nn{4.33}\qquad\qquad $\phi'\not\in w'$\hfill 4.26 ($w'$ is LiiP-consistent), 4.32\\[\jot]
			\nn{4.34}\qquad\qquad $\canModel,w'\not\models\phi'$\hfill 4.1, 4.33\\[\jot]
			\nn{4.35}\qquad\qquad there is $w'\in\mathcal{W}$ \st 
				$w\canrel{M}{a}{\community}w'$ and $\canModel,w'\not\models\phi'$\hfill 4.29, 4.34\\[\jot]
			\nn{4.36}\qquad\qquad $\canModel,w\not\models\proves{M}{\phi'}{a}{\community}$\hfill 4.35\\[\jot]
			\nn{4.37}\qquad\quad $\canModel,w\not\models\proves{M}{\phi'}{a}{\community}$\hfill 4.25, 4.26--4.36\\[\jot]
			\nn{4.38}\qquad $\canModel,w\not\models\proves{M}{\phi'}{a}{\community}$\hfill 4.14--4.37\\[2\jot]
			\nn{4.39}\quad $\proves{M}{\phi'}{a}{\community}\in w$ if and only if $\canModel,w\models\proves{M}{\phi'}{a}{\community}$\hfill 4.3--4.12, 4.13--4.38\\[\jot]
			\nn{4.40} for all $w\in\mathcal{W}$,
				$\proves{M}{\phi'}{a}{\community}\in w$ if and only if $\canModel,w\models\proves{M}{\phi'}{a}{\community}$\hfill 4.2--4.39\\[\jot]
		\end{flushleft}
	\end{enumerate}

	With the Truth Lemma we can now prove that 
			for all $\phi\in\pFormulas$,
				if $\not\LiiPded\phi$ then $\not\models\phi$.
	Let 
		$\phi\in\pFormulas$, and 
	suppose that 
		$\not\LiiPded\phi$.
	Thus, 
		$\set{\neg\phi}$ 
			is LiiP-consistent, and 
			can be extended to a maximally LiiP-consistent set $w$, \ie
				$\neg\phi\in w\in\mathcal{W}$.
	Hence 
		$\canModel,w\models\neg\phi$, by the Truth Lemma.
	Thus: 
		$\canModel,w\not\models\phi$,
		$\canModel\not\models\phi$, and
		$\not\models\phi$.
	That is,
			$\canModel$ is a 
				\emph{universal} (for \emph{all} $\phi\in\pFormulas$) 
				\emph{counter-model} (if $\phi$ is a non-theorem then $\canModel$ falsifies $\phi$).
	
	We are left to prove that 
		$\canModel$ is also an \emph{LiiP-model}. 
	So let us instantiate our data mining operator $\clo{a}{}$ (\cf Page~\pageref{page:DataMining}) on $\mathcal{W}$  
		by letting for all $w\in\mathcal{W}$
			$$\msgs{a}(w)\defeq\setst{M}{\knows{a}{M}\in w},$$ and let us prove that:
					\begin{enumerate}
						\item if $\community\subseteq\community'$
								then ${\canrel{M}{a}{\community}}\subseteq{\canrel{M}{a}{\community'}}$
						\item if $M\equiv_{a}M'$ then ${\canrel{M}{a}{\community}}={\canrel{M'}{a}{\community}}$
						\item if $M\in\clo{a}{w}(\emptyset)$ 
								then $w\canrel{M}{a}{\community}w$
						\item if $w\canrel{\sign{M}{b}}{a}{\community}w'$ then $M\in\clo{b}{w'}(\emptyset)$
						\item for all $b\in\community\cup\set{a}$, 
							$({\canrel{\sign{M}{a}}{b}{\community\cup\set{a}}}\circ{\canrel{M}{a}{\community}})\subseteq
								{\canrel{\pair{M}{b}}{a}{\community}}$.
					\end{enumerate} 
	For (1),
		let $\community'\subseteq\agents$ and
		suppose that $\community\subseteq\community'$.
	That is,
		$\community\cup\community'=\community'$.
	Further,
		let $w,w'\in\mathcal{W}$ and 
		suppose that $w\canrel{M}{a}{\community}w'$.
	That is,
		for all $\phi\in\pFormulas$,
			if $\proves{M}{\phi}{a}{\community}\in w$ 
			then $\phi\in w'$.
	Furthermore,	
		let $\phi\in\pFormulas$ and
		suppose that $\proves{M}{\phi}{a}{\community'}\in w$.
	Thus $\proves{M}{\phi}{a}{\community\cup\community'}\in w$ by the first supposition.
	Since $w$ is maximal,
		$$\text{$(\proves{M}{\phi}{a}{\community\cup\community'})\limp\proves{M}{\phi}{a}{\community}\in w$\quad(group decomposition).}$$
	Hence $\proves{M}{\phi}{a}{\community}\in w$ by \emph{modus ponens}, and 
	thus $\phi\in w'$ by the second supposition.
	
	For (2),
		suppose that 
			$M\equiv_{a}M'$.
	That is,
		for all $w\in\mathcal{W}$, 
			$M\in\clo{a}{w}(\emptyset)$ if and only if $M'\in\clo{a}{w}(\emptyset)$.
	Hence for all $w\in\mathcal{W}$,
		$\knows{a}{M}\in w$ if and only if $\knows{a}{M'}\in w$
				due to the maximality of $w'$, which 
					contains all the term axioms corresponding to the defining clauses of $\clo{a}{w}$.
	Hence for all $w\in\mathcal{W}$,
			$\canModel, w\models\knows{a}{M}$ if and only if $\canModel, w\models\knows{a}{M'}$, by the Truth Lemma.
	Thus for all $w\in\mathcal{W}$, $\canModel, w\models\knows{a}{M}\lequiv\knows{a}{M'}$.
	Hence for all $w\in\mathcal{W}$, 
		$\knows{a}{M}\lequiv\knows{a}{M'}\in w$ by the Truth Lemma.
	Hence the following intermediate result, called IR, 
			$$\text{for all $w\in\mathcal{W}$ and $\phi\in\pFormulas$, 
			$(\proves{M}{\phi}{a}{\community})\lequiv\proves{M'}{\phi}{a}{\community}\in w$,}$$
		by 
			epistemic bitonicity.
	Further,
		let $w,w'\in\mathcal{W}$.
	Hence,
		\begin{itemize}
			\item $w\canrel{M}{a}{\community}w'$ by definition if and only if
			\item (for all $\phi\in\pFormulas$,
					if $\proves{M}{\phi}{a}{\community}\in w$ 
					then $\phi\in w'$) by IR if and only if 
			\item (for all $\phi\in\pFormulas$,
					if $\proves{M'}{\phi}{a}{\community}\in w$ 
					then $\phi\in w'$) by definition if and only if 
			\item $w\canrel{M'}{a}{\community}w'$.
		\end{itemize}	
	
	For (3), 
		let $w\in\mathcal{W}$ and 
		suppose that 
			$M\in\clo{a}{w}(\emptyset)$.
	Hence $\knows{a}{M}\in w$ due to the maximality of $w$, which 
		contains all the term axioms corresponding to the defining clauses of $\clo{a}{w}$.
	Further suppose that $\proves{M}{\phi}{a}{\community}\in w$.
	Since $w$ is maximal,
		$$\text{$(\proves{M}{\phi}{a}{\community})\limp(\knows{a}{M}\limp\phi)\in w$\quad(epistemic truthfulness).}$$
	Hence, 
		$\knows{a}{M}\limp\phi\in w$, and
		$\phi\in w$, by consecutive  \emph{modus ponens.}
	
	For (4),
		let $w,w'\in\mathcal{W}$ and 
		suppose that $w\canrel{\sign{M}{b}}{a}{\community}w'$.
	That is,	
		for all $\phi\in\pFormulas$, 
			if $\proves{\sign{M}{b}}{\phi}{a}{\community}\in w$
			then $\phi\in w'$.
	Since $w$ is maximal,
		$$\text{$\proves{\sign{M}{b}}{\knows{b}{M}}{a}{\community\cup\set{b}}\in w$\quad(authentic knowledge)}$$
	and 
		$$\text{$(\proves{\sign{M}{b}}{\knows{b}{M}}{a}{\community\cup\set{b}})\limp\proves{\sign{M}{b}}{\knows{b}{M}}{a}{\community}\in w$\quad(group decomposition).}$$
	Hence, 
		$\proves{\sign{M}{b}}{\knows{b}{M}}{a}{\community}\in w$ by \emph{modus ponens}, 
		$\knows{b}{M}\in w'$ by particularisation of the supposition, and thus 
		$M\in\clo{b}{w'}(\emptyset)$ by the definition of $\clo{b}{w'}$.
	
	For (5), 
		suppose that $b\in\community\cup\set{a}$ and 
		let $w,w',w''\in\states$.
	Further suppose that 
			$w\canrel{\sign{M}{a}}{b}{\community\cup\set{a}}w'$ 
				(\ie for all $\phi\in\pFormulas$,
					if $\proves{\sign{M}{a}}{\phi}{b}{\community\cup\set{a}}\in w$ 
					then $\phi\in w'$) and 
			$w'\canrel{M}{a}{\community}w''$
				(\ie for all $\phi\in\pFormulas$,
					if $\proves{M}{\phi}{a}{\community}\in w'$ 
					then $\phi\in w''$).
	Furthermore suppose that
		$\proves{\pair{M}{b}}{\phi}{a}{\community}\in w$.
	Since $w$ is maximal, 
		$$\text{$(\proves{\pair{M}{b}}{\phi}{a}{\community})\limp\proves{\sign{M}{a}}{(\proves{M}{\phi}{a}{\community})}{b}{\community\cup\set{a}}\in w$,}$$
	as a direct consequence of nominal peer review and then the first supposition.
	Hence, applying \emph{modus ponens} consecutively, 
		$\proves{\sign{M}{a}}{(\proves{M}{\phi}{a}{\community})}{b}{\community\cup\set{a}}\in w$ by the fourth supposition, 
		$\proves{M}{\phi}{a}{\community}\in w'$ by particularisation of the second supposition, and finally 
		$\phi\in w''$ by the third supposition.
\end{proof}

\section{Application examples}\label{section:ApplicationExamples}
With the simple but powerful language of LiiP, 
	we can concisely express 
		otherwise difficult to formalise security requirements such as those arising in 
			\emph{Access Control} (\cf \cite[Chapter~4]{SecurityEngineering}) and 
			\emph{Data-Base Privacy} (\cf \cite[Chapter~9]{SecurityEngineering}).
\subsection{Access Control}
According to \cite[Chapter~4]{SecurityEngineering}: 
	\begin{quotation}
		Access control is the traditional center of gravity of computer security.
		It is where security engineering meets computer science. Its function is
		to control which principals (persons, processes, machines\ldots) have
		access to which resources in the system --- which files they can read,
		which programs they can execute, how they share data with other principals, and so on.
	\end{quotation}
``Principals'' and ``resources'' mean ``agents'' in our terminology. 
Access rights can be specified by application-specific access-control \emph{policies} $\Phi$; and
specific access is then granted when certain access-authorisation \emph{credentials} $C$ are presented.
These credentials are examples of application-specific base data $B$ 
	(\cf Definition~\ref{definition:LiiPLanguage}), whose
		validity typically is,  
			first, temporary and thus non-monotonic 
				as in the case of 
					one-time credentials and 
					credentials revokable by other, so-called revocation credentials, and,  
			second, restricted to certain agent communities $\community\subseteq\agents$.
Conceptually, 
	an access-control policy can be understood as a set $\Phi$ of 
		implicational laws $\phi$ that 
		together with elementary access-right facts $P$ constitutes 
			a Horn-logical (\cf Prolog) or even an efficiently decidable Datalog theory.
In LiiP, 
	we can formalise each elementary access-right fact as 
		an application-specific atomic proposition $P\in\mathcal{P}$.
An example of such a fact is that 
	an agent $a$ may write-access resource $r$ guarded by a different agent $b$ 
		(acting thus as a \emph{reference monitor}), which
			we can formalise as an atomic proposition 
				$P_{1}\defeq\mathsf{maywrite}(a,r,b)$.
Thus we can let $\community\defeq\set{a,b}\subset\set{a,r,b}\subseteq\agents$.
Naturally,
	agent $a$ may then also read resource $r$ guarded by agent $b$, \ie
		$\phi_{1}\defeq(\mathsf{maywrite}(a,r,b)\limp\mathsf{mayread}(a,r,b))$.
\emph{Et cetera} up to $\phi_{m}$ and $P_{n}$ for some natural numbers $m,n\in\mathbb{N}$.
Now, define the resulting access-control policy as 
	$\Phi\defeq\set{\phi_{i}}_{1\leq i\leq m}\,,$
the resulting access-control LiiP-theory over $\Phi$ as 
	$$\LiiP_{\Phi}\defeq\Clo{}{}(\Phi)$$
(where $\Clo{}{}$ is as in Definition~\ref{definition:AxiomsRules}), and
	$\vdash_{\LiiP_{\Phi}}$ similarly to $\LiiPded$.
Whence the following instance of a direct consequence of nominal peer review
$$\vdash_{\LiiP_{\Phi}}(\proves{\pair{C}{b}}{\mathsf{maywrite}(a,r,b)}{a}{\community})\limp
	\proves{\sign{C}{a}}{\mathsf{maywrite}(a,r,b)}{b}{\community}$$
and the following instance of epistemic truthfulness
$$\vdash_{\LiiP_{\Phi}}(\proves{\sign{C}{a}}{\mathsf{maywrite}(a,r,b)}{b}{\community})\limp
	(\knows{b}{\sign{C}{a}}\limp\mathsf{maywrite}(a,r,b))\,.$$	
Hence by transitivity of logical implication,
$$\vdash_{\LiiP_{\Phi}}(\proves{\pair{C}{b}}{\mathsf{maywrite}(a,r,b)}{a}{\community})\limp
	(\knows{b}{\sign{C}{a}}\limp\mathsf{maywrite}(a,r,b))\,.$$	
This means that 
	if it is commonly accepted in $\community$ that 
		$\pair{C}{b}$ can prove to (and thus inform) $a$ that $a$ may write-access $r$ guarded by $b$,
	then if further $b$ knows $\sign{C}{a}$ 
		(through $a$ presenting $\sign{C}{a}$ to $b$, since
			only $a$ can generate her own signature),
	then indeed $a$ may write-access $r$---and the guard $b$ knows that (due to Fact~\ref{fact:KC}) and
		thus will grant $a$ the requested access.
Actually $b$ will also grant $a$ read-access since according to the policy $\Phi$, write access implies read access:
$$\vdash_{\LiiP_{\Phi}}\begin{array}[t]{@{}l@{}}
	(\proves{\pair{C}{b}}{\mathsf{maywrite}(a,r,b)}{a}{\community})\limp(\knows{b}{\sign{C}{a}}\limp\\
	\qquad(\mathsf{maywrite}(a,r,b)\land\mathsf{mayread}(a,r,b)))\,.
\end{array}$$
Note that 
	we could 
		refine our arguably rough policy $\Phi$ with respect to \emph{agent roles} and thus
		specify a refined policy $\Phi'$.
For example, 
	we could specify that 
		$\Phi\subseteq\Phi'$ and 
		that for all $x,y\in\community$, 
			$\mathsf{guest}(x)\in\mathcal{P}$ and $\mathsf{host}(y)\in\mathcal{P}$ as well as 
			$((\mathsf{guest}(x)\land\mathsf{host}(y))\limp\mathsf{mayread}(x,r,y))\in\Phi'$, 
			$(\mathsf{host}(y)\limp\mathsf{maywrite}(y,r,y))\in\Phi'$.
\emph{Et cetera.} 
Orthogonally to agent roles, 
	we could refine $\Phi$ with respect to 
		\emph{agent clearances} and corresponding 
		\emph{resource classifications} (\cf \emph{Information Flow Control} \cite[Section~8.3.1--2]{SecurityEngineering}) and
		thus specify a refined policy $\Phi''$.
For example we could specify that 
	$\Phi\subseteq\Phi''$ and that 
	for all $a\in\agents$ (and thus for all resources $r$),
		$\mathsf{topsecret}(a),
			\mathsf{secret}(a),
			\mathsf{confidential}(a),
			\mathsf{unclassified}(a)\in\mathcal{P}$ as well as
		$(\mathsf{topsecret}(a)\limp\mathsf{secret}(a))\in\Phi''$,
		$(\mathsf{secret}(a)\limp\mathsf{confidential}(a))\in\Phi''$,  
		$((\mathsf{topsecret}(a)\newline\land\mathsf{topsecret}(r))\limp\mathsf{maywrite}(a,r,b))\in\Phi''$, and
		$((\mathsf{topsecret}(a)\land
			(\mathsf{secret}(r)\lor\linebreak\mathsf{confidential}(r)\lor\mathsf{unclassified}(r)))\limp
				\mathsf{mayread}(a,r,b))\in\Phi''$.
\emph{Et cetera} for other, so-called \emph{no-read-up} and \emph{no-write-down/up} requirements.

\subsection{Data-Base Privacy}
An important example of a resource is a \emph{relational data-base,} say a medical data-base $d$, 
	over application-specific atomic pieces of content data $B$
		(\cf Definition~\ref{definition:LiiPLanguage}).
Note that $d$ typically evolves, whence the point of non-monotonicity.
Then, each unary relation in the data-base $d$ can be understood as a finite (sub)set of content data $B$,
each binary relation as a finite set of ordered pairs $\pair{B}{B'}$ of data $B$ and $B'$, 
each relation of higher finite arity as a finite set of such pairs of pairs, and 
the content of $d$ as a finite set of such relations (finite sets).
Finally, finite sets can be coded as data pairs and thus
the entire content $\mathcal{D}$ of $d$ can be understood as a subset of $\messages$ over 
	the atomic data $B$.
Now, data-base privacy with respect to the data-base $d$ means that
	certain agents $a$ must not be able to infer certain facts $\phi$ from $d$
		(\cf \emph{Inference Control} \cite[Section~9.3]{SecurityEngineering}).
In order to meet this privacy requirement,
	certain atomic data $B$ in $\mathcal{D}$ are blinded (\eg replaced by some dummy datum),
		resulting in a new, partially blinded content $\mathcal{D}'\subseteq\messages'$.
The privacy requirement can now be formalised in the language of LiiP by 
	simply stipulating that for all $M\in\messages'$, 
		$$\neg(\proves{M}{\phi}{a}{\emptyset})\,.$$
The requirement could be proved by 
	induction over the well-structured data.
%

\end{document}